\title{Efficient computation of topological integral transforms} 
\author{Vadim {Lebovici}}{Mathematical Institute, University of Oxford, United Kingdom \and \url{https://vadimlebovici.github.io/}}{vadim.lebovici@maths.ox.ac.uk}{https://orcid.org/0000-0002-2709-3919}{}
\author{Steve {Oudot}}{Inria Saclay, France\and \url{https://geometrica.saclay.inria.fr/team/Steve.Oudot/}}{steve.oudot@inria.fr}{}{}
\author{Hugo {Passe}}{Ecole Normale Supérieure de Lyon, France\and \url{}}{hugo.passe@ens-lyon.fr}{}{}
\authorrunning{V. Lebovici, S. Oudot, H. Passe} 
\keywords{Topological data analysis, Euler calculus, Topological integral transform, Euler characteristic transform, Hybrid transforms} 
\algnewcommand{\LeftComment}[1]{\Statex \(\triangleright\) #1}
\algnewcommand{\Method}[1]{\Statex \textbf{method} #1}
\numberwithin{equation}{section}
\def\Z{{\mathbb Z}}    
\def\R{{\mathbb R}}    
\def\C{{\mathbb C}}    
\def\V{\mathbb{V}}
\def\kk{\R}
\def\st{\, : \,} 
\newcommand{\card}[1]{\##1}
\def\bigO{\mathcal{O}}
\def\sgn{\mathrm{sgn}}
\NewDocumentCommand{\Lloc}{O{\R} O{1}}{%
\mathrm{L}^{#2}_{\textnormal{loc}}(#1)%
}
\NewDocumentCommand{\Lp}{O{\R} O{1}}{%
\mathrm{L}^{#2}(#1)%
}
\newcommand{\eps}{\varepsilon} 
\renewcommand{\phi}{\varphi} 
\DeclareMathOperator*{\limit}{lim}
\newcommand{\relint}[1]{\mathrm{relint}(#1)}
\renewcommand{\lim}[1][ ]{\varprojlim_{#1}}
\def\field{\kk}
\newcommand{\dual}[1]{{#1}^{*}}
\newcommand{\dualdot}[2]{\langle #1; #2\rangle}
\NewDocumentCommand{\projective}{d[]}{
\IfNoValueTF{#1}{\def\argument{}}{\def\argument{\hspace{-0.1em}\left(#1\right)}}%
\mathbb{P}\argument%
}
\NewDocumentCommand{\dprojective}{d[]}{
\IfNoValueTF{#1}{\def\argument{}}{\def\argument{\hspace{-0.2em}\left(#1\right)}}%
\dual{\mathbb{P}}\argument%
}
\renewcommand{\vert}[1]{\mathrm{Vert(#1)}}
\def\cplx{\mathcal{P}}
\def\ccplx{\mathcal{C}}
\def\collec{\mathcal{X}}
\def\cplxleq{\cplx_{\{\xi\leq t\}}}
\def\cplxeq{\cplx_{\{\xi=t\}}}
\def\phileq{\phi_{\{\xi\leq t\}}}
\def\phieq{\phi_{\{\xi=t\}}}
\newcommand{\sgnvec}[1]{\eps^{#1}}
\def\cube{P}
\def\cubee{Q}
\NewDocumentCommand{\dist}{O{\V} O{x}}{%
\mathcal{D}_{#2}'(#1)%
}
\NewDocumentCommand{\tempdist}{O{\V} O{x}}{%
\mathcal{S}_{#2}'(#1)%
}
\NewDocumentCommand{\schwartz}{O{\V} O{x}}{%
\mathcal{S}_{#2}(#1)%
}
\NewDocumentCommand{\Fourier}{m O{x}}{%
\mathcal{F}_{#2}(#1)%
}
\def\Laplace{\mathcal{L}}
\def\Fourier{\mathcal{F}}
\def\Bessel{\mathcal{B}}
\def\PL{\textnormal{PL}}        					
\NewDocumentCommand{\Mod}{O{} O{\V} O{\field}}{%
\textnormal{Mod}_{#1}({#3}_{#2})%
}
\NewDocumentCommand{\Db}{O{} O{\V} O{\mathbf{k}}}{%
\textnormal{D}^{\textnormal{b}}_{#1}({#3}_{#2})%
}
\def\dt{\mathrm{d}t}
\def\conv{\star}
\NewDocumentCommand{\CF}{O{\PL} O{\R^n}}{%
\textnormal{CF}_{#1}(#2)%
}
\def\1{{\mathbf{1}}} 
\NewDocumentCommand{\Euler}{d[]}{
\IfNoValueTF{#1}{\def\argument{}}{\def\argument{\left(#1\right)}}%
\chi\argument%
}
\NewDocumentCommand{\Eulerc}{d[]}{
\IfNoValueTF{#1}{\def\argument{}}{\def\argument{\left(#1\right)}}%
\chi_c\argument%
}
\NewDocumentCommand{\EPind}{d[]}{
\IfNoValueTF{#1}{\def\argument{}}{\def\argument{\left(#1\right)}}%
\chi_{\mathrm{loc}}\argument%
}
\NewDocumentCommand{\epigraph}{O{f} O{}}{%
\Gamma_{#1}^{#2}%
}
\NewDocumentCommand{\sublevelCF}{m O{\gamma}}{%
\phi_{#1}^{#2}%
}
\newcommand{\lwcritvar}[2][\phi]{\Delta_{#1}^-(#2)}
\newcommand{\ordcritvar}[2][\phi]{\Delta_{#1}^{\mathrm{ord}}(#2)}
\newcommand{\lwcritpts}[2][\phi]{\mathrm{Crit}_{#1}^-(#2)}
\newcommand{\ordcritpts}[2][\phi]{\mathrm{Crit}_{#1}^{\mathrm{ord}}(#2)}
\newcommand{\critpts}[2][\phi]{\mathrm{Crit}_{#1}(#2)}
\def\codelwcritpts{C^-}
\def\codeordcritpts{C^{\mathrm{ord}}}
\def\codelwcritval{\Delta^-}
\def\codeordcritval{\Delta^{\mathrm{ord}}}
\newcommand{\str}[1]{\mathrm{St}(#1)}
\newcommand{\lwst}[1]{\mathrm{LwSt}(#1)}
\newcommand{\upst}[1]{\mathrm{UpSt}(#1)}
\newcommand{\lwstpt}[1]{\mathrm{LwSt}^*(#1)}
\newcommand{\upstpt}[1]{\mathrm{UpSt}^*(#1)}
\newcommand{\Radon}[1][\phi]{\mathcal{R}_{#1}}
\newcommand{\ECT}[1][\phi]{\mathrm{ECT}_{#1}}
\def\kernel{\kappa}
\def\bkernel{\bar{\kappa}}
\newcommand{\HT}[1][\phi]{\mathrm{HT}_{#1}}
\NewDocumentCommand{\transform}{s O{\phi} d<> O{\kappa}}{%
	\IfBooleanTF #1%
	{\mathrm{T}_{#4}}%
	{\IfNoValueTF{#3}{\def\argument{}}{\def\argument{\left(#3\right)}}%
	\mathrm{T}_{#4}\left[#2\right]{\argument}
	}%
}
\NewDocumentCommand{\subleveltransform}{O{\kappa} d<> O{f_{|Z}}}{%
\IfNoValueTF{#2}{\def\argument{}}{\def\argument{\left(#2\right)}}%
\mathrm{Sub}_{#1}\left[#3\right]{\argument}%
}
\def\EL{\mathcal{E}\hspace{-0.1em}\Laplace}
\NewDocumentCommand{\ELaplace}{O{\phi} d<>}{%
\IfNoValueTF{#2}{\def\argument{}}{\def\argument{\left(#2\right)}}%
\EL\left[#1\right]{\argument}%
}
\def\EF{\mathcal{E}\hspace{-0.1em}\Fourier}
\NewDocumentCommand{\EFourier}{O{\phi} d<>}{%
\IfNoValueTF{#2}{\def\argument{}}{\def\argument{\left(#2\right)}}%
\EF\left[#1\right]{\argument}%
}
\NewDocumentCommand{\GREFourier}{O{\phi} d<>}{%
\IfNoValueTF{#2}{\def\argument{}}{\def\argument{\left(#2\right)}}%
\EF^\mathrm{GR}\left[#1\right]{\argument}%
}
\def\EB{\mathcal{E}\Bessel}
\NewDocumentCommand{\EBessel}{O{\phi} d<>}{%
\IfNoValueTF{#2}{\def\argument{}}{\def\argument{\left(#2\right)}}%
\EB\left[#1\right]{\argument}%
}
\def\algPreprocClaPts{\code{preproc\_classical\_pts}}
\def\algPreprocOrdPts{\code{preproc\_ordinary\_pts}}
\def\algComputeHT{\code{compute\_ht}}
\def\cardlw{N^-}
\def\cardord{N^\mathrm{ord}}
\newcommand{\code}[1]{\texttt{#1}}
\def\bitmapccplx{\code{Bitmap\_cubical\_complex}}
\def\EmbeddedCplx{\code{EmbeddedComplex}}
\def\gudhi{\code{GUDHI}}
\def\eucalc{\code{eucalc}}
\def\demeter{\code{demeter}}
\def\sinatra{\code{Sinatra-Pro}}
\def\python{\code{Python}}
\def\cpp{\code{C++}}
\def\fashionmnist{\code{fashion\_MNIST}}
\begin{document}

\maketitle

\begin{abstract}
Topological integral transforms have found many applications in shape analysis, from prediction of clinical outcomes in brain cancer to analysis of barley seeds. Using Euler characteristic as a measure, these objects record rich geometric information on weighted polytopal complexes. While some implementations exist, they only enable discretized representations of the transforms, and they do not handle weighted complexes (such as for instance images). Moreover, recent \emph{hybrid transforms} lack an implementation.

In this paper, we introduce \eucalc{}, a novel implementation of three topological integral transforms---the Euler characteristic transform, the Radon transform, and hybrid transforms---for weighted cubical complexes. Leveraging piecewise linear Morse theory and Euler calculus, the algorithms significantly reduce computational complexity by focusing on critical points. Our software provides exact representations of transforms, handles both binary and grayscale images, and supports multi-core processing. It is publicly available as a \cpp{} library with a \python{} wrapper. We present mathematical foundations, implementation details, and experimental evaluations, demonstrating \eucalc{}'s efficiency. 
\end{abstract}

\section{Introduction}
\label{sec:introduction}

\subsection{Motivations} 
Since its introduction in the late 80s by Viro \cite{V88} and Schapira \cite{S91}, \emph{Euler calculus}---the integral calculus with respect to the Euler characteristic---has been of increasing interest in topological data analysis. Using this integral calculus, one can define integral transforms conveying significant topological information. The most famous one, the \emph{Euler characteristic transform} (ECT), has found many applications in shape analysis, for instance in the prediction of clinical outcomes in brain cancer \cite{CMCMR20}, in the analysis of barley seeds \cite{amezquita2022measuring} or in the recovery of morphological variations across genera of primates \cite{tang2022sinatra}. More formally, given a sufficiently tame compact shape~$K$ in Euclidean space (e.g., an embedded polytopal complex), the ECT records the Euler characteristic of the intersection~$\xi^{-1}(-\infty, t]\cap K$ for all linear forms~$\xi:\R^n\to\R$ and all~$t\in\R$. What is remarkable is that this transform entirely characterizes the shape~$K$, that is, it is injective: if two shapes have same ECT, then they are equal \cite{CMT18,ghrist2018persistent,S95, T14}. The ECT is an instance of a more general transform called the \emph{Radon transform} \cite{S95} and the injectivity of the ECT is a consequence of Theorem~3.1 in loc. cit.. Another natural specialization of the Radon transform, simply called \emph{Radon transform} in this paper, records the Euler characteristic of~$\xi^{-1}(t)\cap K$ for all linear forms~$\xi:\R^n\to\R$ and all~$t\in\R$. 

The most problematic aspect of Euler calculus is its instability under numerical approximations \cite{Cur12}. To remedy it, integral transforms mixing Euler calculus and Lebesgue integration were introduced in~\cite{GR11}, then generalized
under the name \emph{hybrid transforms}~(HT) in~\cite{Lebovici22}. These transforms integrate the Radon transform against a chosen kernel to smooth out and compress the information. The use of a kernel provides a wide variety of potential summaries that can be used to emphasize different aspects of data \cite{Hacquard2023}. All these transforms are defined more generally for any constructible function, but in this article we restrict ourselves to the case of weighted polytopal complexes, i.e., polytopal complexes for which each cell is assigned an integer.

\subsection{Existing implementations and limitations} 
In this paper, we are interested in the computation of topological integral transforms. Several implementations already exist. The \python{} package \demeter{} \cite{demeter,amezquita2022measuring} implements a discretized version of the ECT for axis-aligned cubical complexes. This package turns an $n$-dimensional binary image into an embedded cubical complex~$\ccplx$ in~$\R^n$ in a prescribed way: vertices are defined by non-zero pixels and higher dimensional cubes are defined by adjacency relations. The ECT of~$\ccplx$ is then output for any direction~$\xi:\R^n\to\R$ as a discretization of the \emph{Euler characteristic curve} ~$t\in\R \mapsto \Euler(\xi^{-1}(-\infty, t]\cap \ccplx)$ for some chosen discretization step~\code{T}. The computation of the curve is done using the classical formula for the Euler characteristic~$\Euler(\xi^{-1}(-\infty, t]\cap \ccplx)$ as a sum of terms~$(-1)^{\dim(c)}$ over all cells~$c$ of~$\ccplx$ such that~$c\subseteq \xi^{-1}(-\infty, t]$. A bucket sort of~$\xi$ values on the vertices optimizes the computation of the discretization. The \python{} package \sinatra{} \cite{tang2022sinatra} implements a similar algorithm for triangular meshes both for the ECT and its smoothed version~(SECT)~\cite{CMCMR20}.

These implementations suffer from several essential limitations. 
First of all, they only work with polytopal complexes, not weighted polytopal complexes. This prevents, e.g., the processing of images, which are ubiquitous in applications. Furthermore, these implementations only output discretized Euler characteristic curves. This is important because, while~$2^n$ well-chosen Euler characteristic curves are sufficient to fully reconstruct a cubical complex embedded in~$\R^n$, each curve must be computed exactly for the reconstruction to be correct. Reconstructing the complex from discretized curves will likely lead to substantial errors in the reconstruction, due to the instability of Euler calculus with respect to numerical approximations. On the contrary, an exact computation of integral transforms would pave the way for the implementation of their left-inverses. A software computing an exact version of the ECT on weighted cubical complexes together with an algorithm for its left-inverse does exist, but only for two- and three-dimensional complexes \cite{betthauser}. Besides, this software (as all aforementioned ones) does not allow for the computation of the Radon and hybrid transforms. There is only one recently released software \cite{Hacquard2023} that computes hybrid transforms for some specific complexes arising in the context of multi-parameter persistence. However, no algorithm exists for general polytopal complexes, nor even for cubical ones. To the best of our knowledge, no implementation of the Radon transform exists.

\subsection{Contributions} We introduce fast implementations of the ECT, Radon transform, and hybrid transforms for weighted (axis-aligned) cubical complexes built from grayscale images. After a pre-processing step, our implementation can evaluate the transform exactly at any vector~$\xi$ in time  at most linear (and often largely sublinear in practice) in the number of vertices in the complex. This is thanks to two major optimizations. 

Our first and main optimization comes from the use of piecewise linear Morse theory~\cite{BB97}, which guarantees that, under some assumptions that are automatically satisfied by linear forms~$\xi:\R^n\to \R$ on polytopal complexes~$K$, the topology of~$\xi^{-1}(-\infty, t] \cap K$ changes only when~$t\in\R$ belongs to a finite set of so-called \emph{critical values}. These critical values only appear when~$\xi^{-1}(t)$ contains a vertex of the complex called a \emph{critical point}. The set of critical values of $\xi$ together with the associated changes in the Euler characteristic of sublevel-sets yield a finite and exact representation of the associated Euler characteristic curve---see for instance \cite{CMT18} for a precise statement in the case of unweighted cubical complexes. This reduces the computation from a sum over all cells to a sum over only a few critical points. As an illustration, we show that there are between 34 and 21 times fewer critical points than vertices on the images of the \fashionmnist{} data set.

Our second optimization comes from the observation that two linear forms~$\xi$ and~$\xi'$ whose canonical coordinates have same signs share the same critical points on axis-aligned cubical complexes. Precomputing these points for~$2^n$ well-chosen directions enables fast evaluation of topological transforms in many directions at a very low cost. 
Moreover, the formalism of Euler calculus suggests that the results of piecewise linear Morse theory also hold for weighted cubical complexes. We exploit this remark in our implementation. Consequently, computing topological transforms of grayscale images is done in less than twice the time for binary ones.  

Our software, called \eucalc{}~\cite{eucalc}, is available as a \cpp{} library together with a \python{} wrapper to make it easily compatible with scikit-learn like interfaces. It provides a data structure for weighted embedded cubical complexes allowing for fast preprocessing of critical points. Our data structure builds on the implementation of cubical complexes from the \gudhi{} library to benefit from their flexibility of instanciation. Moreover, our structure represents exact versions of the ECT and the Radon transform, together with evaluation and discretization routines. Our implementation also includes a parallelized version of the preprocessing step and the computation of the hybrid transform.

\section{Preliminaries}\label{sec:preliminaries}
Let~$n\geq 2$ be an integer. The space of linear forms on~$\R^n$ will be canonically identified with~$\R^n$. For clarity, we refer to linear forms with greek letters~$\xi\in\R^n$, and we denote the canonical dot product of~$\xi$ and~$x\in\R^n$ by~$\dualdot{\xi}{x}$.
For any function~$f:\R^n\to\R$ and~$t\in\R$, we denote by~$\{f = t\}$ and~$\{f \leq t\}$ the level set~$\{x\in\R^n \st f(x) = t\}$ and the sublevel set~$\{x\in\R^n \st f(x) \leq t\}$ respectively.
\subsection{Weighted polytopal complexes}
While our algorithms deal with cubical complexes, we introduce our main objects in the more general setting of polytopal complexes for the sake of clarity. We refer to Ziegler \cite{Z06} for more details on polytopal complexes. 
We call \emph{polytope} a bounded intersection of a finite number of closed half-spaces of~$\R^n$. Equivalently, a polytope is the convex hull of a finite number of points in~$\R^n$, called its \emph{vertices} \cite[Thm.~1.1]{Z06}. We denote by~$\vert{P}$ the (finite) set of all vertices of a polytope~$P$, and we call it its \emph{vertex set}. 
A \emph{face} of~$P$ is any set of the form~$F = P \cap \{\xi = t\}$ where~$\xi\in\R^n$ and~$t\in \R$ are such that~$P\subseteq \{\xi\leq t\}$. 
The \emph{dimension} of a polytope is the dimension of the smallest affine subspace of~$\R^n$ containing it.
A \emph{polytopal complex}~$\cplx$ is a finite collection of polytopes in~$\R^n$ (called \emph{cells}) such that (i) the empty cell is in~$\cplx$, (ii) if~$P$ is in~$\cplx$, then all the faces of~$P$ are also in~$\cplx$, and (iii) if~$P$ and~$Q$ are in~$\cplx$, then their intersection is a face of both~$P$ and~$Q$. One can always extend a collection of polytopes satisfying (i) and (iii) to a polytopal complex by adding all faces of the polytopes in the collection. This polytopal complex is the smallest one containing the polytopes of the given collection.

In this paper, we are mainly interested in cubical complexes. A \emph{cube} in~$\R^n$ is a subset~$\cube = [a_1,b_1]\times\ldots\times[a_n,b_n]\subseteq \R^n$ such that~$b_i-a_i = b_j - a_j$ if~$a_i \ne b_i$ and~$a_j \ne b_j$. A \emph{cubical complex}~$\cplx$ is a polytopal complex whose polytopes are cubes in~$\R^n$. 

Let~$\cplx$ be a polytopal complex in~$\R^n$ and~$\xi\in \R^n$. The \emph{star} of a polytope~$P\in\cplx$ is the collection~$\str{P}$ of polytopes~$Q\in\cplx$ such that~$Q\supseteq P$. For any~$x\in\vert{\cplx}$, the \emph{lower star} of~$x$ with respect to~$\xi$ is the collection~$\lwst{\xi,x}$ of polytopes~$\cubee \in\str{x}$ such that~$\max_\cubee \xi = \xi(x)$. Similarly, we define the \emph{upper star}~$\upst{\xi,x}$. These collections of polytopes are not complexes as they are not closed under taking faces.

Let~$\collec$ be a finite collection of polytopes in~$\R^n$ and let~$\phi: \collec \to \Z$. If~$\collec$ is a polytopal complex, then we call~$\phi$ a \emph{weighted polytopal complex}. The \emph{Euler characteristic} of~$\phi$ is defined as~$\Euler(\phi) = \sum_{P\in\collec} (-1)^{\dim(P)} \phi(P)$. This definition generalizes the usual notion of Euler characteristic of a polytopal complex \cite[Corollary~8.17]{Z06} and is a particular case of Euler integration~\cite{V88,S91}. 

\subsection{Integral transforms} 
Let~$\xi\in\R^n$ and~$t\in\R$. Consider the collections of polytopes~$\{P\cap\{\xi\leq t\} : P\in\cplx\}$ and~$\{P\cap\{\xi=t\} : P\in\cplx\}$ and denote respectively by~$\cplxleq$ and~$\cplxeq$ the smallest polytopal complexes containing these collections. Any function~$\phi:\cplx\to\Z$ induces a function~$\phileq:\cplxleq\to \Z$ defined for any~$Q\in\cplxleq$ by~$\phileq(Q) = \phi(P)$ where~$P$ is the smallest polytope of~$\cplx$ containing~$Q$. In that case, one has $\dim(Q)=\dim(P)-1$. Similarly, the function~$\phi$ induces a function~$\phi_{\{\xi=t\}}:\cplxeq\to \Z$.

Let~$\phi:\cplx\to\Z$. The first topological integral transform we consider in this paper is the \emph{Radon transform}~\cite{S95}, namely, the function~$\Radon:\R^n\times\R \to \Z$ defined for any~$(\xi, t)\in\R^n\times\R$ by:
\begin{equation*}
	\Radon(\xi, t) = \Euler[\phieq].
\end{equation*} 
The second integral transform, the \emph{Euler characteristic transform}~\cite{T14}, is the function~$\ECT:\R^n\times\R \to \Z$ defined for any~$(\xi, t)\in\R^n\times\R$ by:
\begin{equation*}
	\ECT(\xi, t) = \Euler[\phileq].
\end{equation*} 
Following \cite{Lebovici22}, we consider specific smoothings of topological transforms. Let~$\kernel:\R\to\C$ be locally integrable. The \emph{hybrid transform} of~$\phi$ with kernel~$\kernel$ is the function~$\HT:\R^n \to \C$ defined for any~$\xi\in\R^n$ by:
\begin{equation*}
	\HT(\xi) = \int_\R \kernel(t) \Radon(\xi, t) \dt.
\end{equation*} 

\subsection{Naive algorithm}
The following lemma is an easy consequence of the definitions.
\begin{lemma}\label{lem:naive}
    Let~$\bkernel:\R\to\C$ be a primitive of~$\kernel$.
	\begin{align*}
		\Radon(\xi,\cdot) &= \sum_{P\in\cplx} \phi(P)\,\1_{[\min_P\xi,\max_P\xi]},\\
		\ECT(\xi,\cdot) &= \sum_{P\in\cplx} \phi(P)\,\1_{[\min_P\xi,+\infty)}.
	\end{align*}
	As a consequence, 
	\begin{equation*}
		\HT(\xi) = \sum_{P\in\cplx} \phi(P) \cdot \left(\bkernel\left(\max_P \xi\right) - \bkernel\left(\min_P \xi\right)\right).
	\end{equation*}
\end{lemma}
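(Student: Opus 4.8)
The plan is to prove the two displayed identities for $\Radon(\xi,\cdot)$ and $\ECT(\xi,\cdot)$ first, and then deduce the formula for $\HT(\xi)$ by integrating the first one against $\kernel$. Both transforms are $\Z$-linear in the weight $\phi$: unfolding $\Radon(\xi,t)=\Euler[\phieq]$ through the definitions of the Euler characteristic and of the induced weighting $\phieq$ gives $\Radon(\xi,t)=\sum_{Q\in\cplxeq}(-1)^{\dim Q}\phi(P_Q)$, where $P_Q$ denotes the smallest cell of $\cplx$ containing $Q$, and the analogous expansion holds for $\ECT$ over $\cplxleq$. Hence both sides of each identity are $\Z$-linear functionals of $\phi$, and it suffices to treat a single weighted cell $\phi=\1_{\{P\}}$ for a fixed $P\in\cplx$ and then sum over $P\in\cplx$.

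For a single cell the argument is purely geometric and rests on convexity. Grouping the cells $Q$ of the sliced complex according to the cell $P_Q$ they originate from reorganizes $\Radon(\xi,t)$ as a weighted sum, over the cells $P\in\cplx$, of the Euler characteristic of the slice $P\cap\{\xi=t\}$. This slice is a bounded polytope; it is nonempty exactly when $t\in[\min_P\xi,\max_P\xi]$, and whenever nonempty it is convex and hence has Euler characteristic one. Thus the contribution of $P$ is $\phi(P)\,\1_{[\min_P\xi,\max_P\xi]}$, which yields the first identity. The second follows identically after replacing the level slice $P\cap\{\xi=t\}$ by the sublevel slice $P\cap\{\xi\leq t\}$: the latter is nonempty exactly when $t\geq\min_P\xi$ and is again convex when nonempty, producing the half-line indicator $\1_{[\min_P\xi,+\infty)}$.

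The formula for $\HT$ is then immediate. Substituting the first identity into $\HT(\xi)=\int_\R\kernel(t)\Radon(\xi,t)\dt$ and using that $\cplx$ is finite, I interchange the finite sum with the integral to obtain $\HT(\xi)=\sum_{P\in\cplx}\phi(P)\int_{\min_P\xi}^{\max_P\xi}\kernel(t)\dt$. Each integral is taken over a compact interval and $\bkernel$ is a primitive of the locally integrable function $\kernel$, so the fundamental theorem of calculus evaluates it as $\bkernel(\max_P\xi)-\bkernel(\min_P\xi)$, giving the announced expression.

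The step I expect to require the most care is the reorganization in the second paragraph, namely verifying that the definition of $\Euler[\phieq]$ regroups into the announced weighted sum of Euler characteristics of closed slices with the correct coefficients. This is where one must control the dimension shift between a cell $P$ and its slice, handle the degenerate cells on which $\xi$ is constant (so that $\min_P\xi=\max_P\xi$), and account for the behaviour at the interval endpoints. By contrast, the linearity reduction, the interchange of the finite sum with the integral, and the application of the fundamental theorem of calculus are all routine.
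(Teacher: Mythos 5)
Your linearity reduction and the derivation of the $\HT$ formula from the Radon identity are routine and fine, but the regrouping step in your second paragraph --- exactly the one you flagged as delicate --- is where the argument genuinely fails under the paper's definitions. Grouping the cells $Q\in\cplxeq$ by the smallest cell $P_Q\in\cplx$ containing them does \emph{not} attribute to $P$ the full face poset of the closed slice $P\cap\{\xi=t\}$: the faces of that slice lying in proper faces of $P$ have $P_Q\subsetneq P$ and are counted in those smaller cells' groups with weight $\phi(P_Q)$, not $\phi(P)$. The group of $P$ consists only of the cells meeting $\relint{P}$, and its signed count is $\chi_c\big(\relint{P}\cap\{\xi=t\}\big)=(-1)^{\dim(P)-1}$ on the \emph{open} interval $(\min_P\xi,\max_P\xi)$ for $\dim(P)\geq 1$ and generic $\xi$ --- this is precisely the paper's \Cref{lem:pushforward-polytope} --- and not $+1$ on the closed interval. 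Your convexity computation ($\chi$ of a nonempty compact convex slice is $1$) instead computes the Radon transform of the weighting equal to $1$ on $P$ \emph{and all of its faces}, so summing these answers with coefficients $\phi(P)$ double counts shared faces. Concretely: for a single closed edge $e=[x,y]$ with $\phi\equiv 1$ and $t=\dualdot{\xi}{x}$, the definition gives $\Euler[\phieq]=\phi(\{x\})=1$ while your regrouped expression gives $\phi(e)+\phi(\{x\})=2$; worse, for a square $P$ with $\phi=\1_P$ and $t$ in the interior of $\xi(P)$, the definition gives $\Euler[\phieq]=-1$ while your formula gives $+1$, so the discrepancy is a sign on the whole interior, not endpoint bookkeeping that could be patched.

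To be fair, the paper offers no proof of this lemma (it is asserted as ``an easy consequence of the definitions''), and the identities you derive are exactly correct for the \emph{closed-cell} convention, i.e.\ for the constructible function $\sum_{P\in\cplx}\phi(P)\1_{P}$ with $P$ closed, where $\Radon(\xi,t)=\sum_P\phi(P)\,\chi\big(P\cap\{\xi=t\}\big)$ holds by linearity and your convexity argument applies verbatim; that is plausibly what the authors intended here. But under the conventions the paper actually sets up (the $\phieq$-based definition, equivalently $\widehat{\phi}=\sum_P\phi(P)\1_{\relint{P}}$, which is the convention used throughout \Cref{app:proofs} and in \Cref{lem:expression-transforms-crit}), the correct naive formulas acquire signs and different endpoints, e.g.\ $\Radon[\1_P](\xi,\cdot)=(-1)^{\dim(P)-1}\1_{(\min_P\xi,\,\max_P\xi)}$ and $\ECT[\1_P](\xi,\cdot)=(-1)^{\dim(P)}\1_{[\max_P\xi,\,+\infty)}$ for non-vertex cells. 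A complete proof must therefore either state explicitly that it replaces $\Euler[\phieq]$ by $\sum_P\phi(P)\chi(P\cap\{\xi=t\})$ and justify why (it cannot, as the two quantities differ), or carry out the grouping correctly with relatively open slices --- in which case the displayed identities change. As written, your proof silently conflates the two conventions at the one step you yourself identified as the crux.
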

This lemma straightforwardly suggests naive algorithms to compute topological integral transforms over any polytopal complex. Doing so, the time complexity of computing any one of the above integral transforms over~$\cplx$ is~$\bigO(N \cdot \card \cplx)$ where $N$ is the maximum number of vertices of cells in $\cplx$. These naive algorithms will serve as a baseline for our future optimizations. Notice that $\card \cplx$ is exponential in the dimension.

\section{Euler critical points}\label{sec:Euler-critical-points}
In this section, we slightly generalize the notion of Euler critical value of polytopal complexes \cite[Def.~6.2]{CMT18} to weighted polytopal complexes. All mathematical results are folklore. We state them and provide proofs in \Cref{app:proofs} for completeness. 

Let~$\cplx$ be a polytopal complex,~$\xi\in\R^n$ and~$\phi:\cplx\to\Z$. Further assume that~$\xi$ is \emph{generic} on~$\cplx$, i.e., no polytope of~$\cplx$ of positive dimension is contained in a level set of~$\xi$. We define the \emph{ordinary} and \emph{classical critical values} of~$\xi$ at a vertex~$x$ of $\cplx$ respectively~by:
\begin{align*}
    \ordcritvar{\xi,x} &= \Euler[\phi_{|\upst{\xi,x}}] - \Euler[\phi_{|\lwst{\xi,x}}], \\[0.5em]
    \lwcritvar{\xi,x} &= \Euler[\phi_{|\lwst{\xi,x}}].
\end{align*}
A vertex~$x$ of~$\cplx$ is an \emph{Euler~$\phi$-critical point} of~$\xi$ if one of~$\lwcritvar{\xi,x}$ or~$\ordcritvar{\xi,x}$ is non-zero. Such a point is called \emph{ordinary} when~$\ordcritvar{\xi,x} \ne 0$ and \emph{classical} when~$\lwcritvar{\xi,x}\ne 0$. Euler~$\phi$-critical points are sometimes called \emph{critical points} when~$\phi$ and~$\xi$ are clear from the context. Note that a critical point can be both classical and ordinary. The set of classical critical points is denoted by~$\lwcritpts{\xi}$, the set of ordinary critical points by~$\ordcritpts{\xi}$ and the set of all critical points by~$\critpts{\xi} = \lwcritpts{\xi} \cup \ordcritpts{\xi}$. Critical points and values are illustrated in \Cref{app:running-example}. 
The following lemma expresses integral transforms as sums over critical points. It is key to our algorithms.

\begin{lemma}\label{lem:expression-transforms-crit}
    Let~$\phi:\cplx\to\Z$ and let~$\xi\in\R^n$ be generic on~$\cplx$. Then,
    \begin{align*}
        \Radon(\xi, \cdot) &= \sum_{x\in\lwcritpts{\xi}} \lwcritvar{\xi,x} \1_{\{\dualdot{\xi}{x}\}}  + \sum_{x\in\ordcritpts{\xi}} \ordcritvar{\xi,x}\1_{(\dualdot{\xi}{x},+\infty)}, \\[0.2em]
        \ECT(\xi, \cdot) &= \sum_{x\in\lwcritpts{\xi}} \lwcritvar{\xi,x} \1_{[\dualdot{\xi}{x},+\infty)}.
    \end{align*}
	Furthermore, if~$\kernel:\R\to\C$ is locally integrable and~$\bkernel$ is a primitive of~$\kernel$ on~$\R$, one has:
	\begin{equation*}
		\HT(\xi) = - \hspace{-1.1em} \sum_{x\in\ordcritpts{\xi}} \ordcritvar{\xi,x} \cdot \bkernel\big(\dualdot{\xi}{x}\big).
	\end{equation*}
\end{lemma}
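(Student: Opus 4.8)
The plan is to derive all three identities from the cell-wise expressions of \Cref{lem:naive} by re-indexing the sum over cells of $\cplx$ according to the vertex at which $\xi$ attains its relevant extremum on each cell. The genericity hypothesis is exactly what makes this re-indexing well defined: since no positive-dimensional cell of $\cplx$ lies in a level set of $\xi$, the linear form $\xi$ attains its minimum and its maximum on each cell $P$ at unique vertices, which I denote $a(P)$ and $b(P)$; thus $\min_P\xi=\dualdot{\xi}{a(P)}$ and $\max_P\xi=\dualdot{\xi}{b(P)}$, and $P$ belongs to $\upst{\xi,a(P)}$ and to $\lwst{\xi,b(P)}$. Moreover $P$ lies in the upper star of no vertex other than $a(P)$ and in the lower star of no vertex other than $b(P)$, so every cell is counted exactly once across all upper stars and exactly once across all lower stars.

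For the \ECT this is immediate. By \Cref{lem:naive}, together with additivity of the Euler characteristic, the net contribution of a cell $P$ to $\Euler[\phileq]$ is $(-1)^{\dim P}\phi(P)$ as soon as $P$ lies entirely below level $t$, that is for $t\geq\max_P\xi$, and is zero otherwise; so $\ECT(\xi,\cdot)=\sum_{P}(-1)^{\dim P}\phi(P)\,\1_{[\dualdot{\xi}{b(P)},+\infty)}$. Grouping the cells by their maximal vertex $b(P)$ gives $\ECT(\xi,\cdot)=\sum_{x\in\vert{\cplx}}\big(\sum_{P\in\lwst{\xi,x}}(-1)^{\dim P}\phi(P)\big)\1_{[\dualdot{\xi}{x},+\infty)}$, and the inner sum is by definition $\Euler[\phi_{|\lwst{\xi,x}}]=\lwcritvar{\xi,x}$. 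Vertices with $\lwcritvar{\xi,x}=0$ drop out, leaving exactly the sum over $\lwcritpts{\xi}$.

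For the \Radon transform I would run the same regrouping, but now each cell contributes an indicator supported on the bounded interval $[\min_P\xi,\max_P\xi]$, whose two endpoints are governed by the two distinct vertices $a(P)$ and $b(P)$. The idea is to split this interval indicator as a difference of the two rays $\1_{(\dualdot{\xi}{a(P)},+\infty)}$ and $\1_{[\dualdot{\xi}{b(P)},+\infty)}$, so that the contribution of $P$ is distributed between its minimal and its maximal vertex. After summing over all cells and collecting the terms attached to a fixed vertex $x$, the coefficient of the point mass $\1_{\{\dualdot{\xi}{x}\}}$ reassembles into $\Euler[\phi_{|\lwst{\xi,x}}]=\lwcritvar{\xi,x}$, while the coefficient of the ray $\1_{(\dualdot{\xi}{x},+\infty)}$ reassembles into $\Euler[\phi_{|\upst{\xi,x}}]-\Euler[\phi_{|\lwst{\xi,x}}]=\ordcritvar{\xi,x}$; restricting to nonzero values recovers $\lwcritpts{\xi}$ and $\ordcritpts{\xi}$. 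I expect the main obstacle to be the endpoint and sign bookkeeping here: one must track which vertex carries the closed endpoint of each interval and which carries the open one---this is precisely the origin of the isolated point mass $\1_{\{\dualdot{\xi}{x}\}}$---and carry the sign dictated by the convention $\dim Q=\dim P-1$ used to evaluate $\Euler$ on level sets. Everything else is a routine rearrangement of a finite sum.

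Finally, for the hybrid transform I would substitute the Radon formula just obtained into $\HT(\xi)=\int_\R\kernel(t)\Radon(\xi,t)\dt$ and integrate term by term. The point masses $\1_{\{\dualdot{\xi}{x}\}}$ integrate to zero against the locally integrable kernel, and each ray yields $\int_{\dualdot{\xi}{x}}^{+\infty}\kernel(t)\dt=\limit_{t\to+\infty}\bkernel(t)-\bkernel(\dualdot{\xi}{x})$. The boundary value $\limit_{t\to+\infty}\bkernel(t)$ then appears multiplied by $\sum_{x}\ordcritvar{\xi,x}$, which vanishes: by the counting observation of the first paragraph, $\sum_{x}\Euler[\phi_{|\upst{\xi,x}}]=\sum_{x}\Euler[\phi_{|\lwst{\xi,x}}]=\Euler[\phi]$, hence $\sum_{x}\ordcritvar{\xi,x}=0$. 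This cancellation is what both guarantees convergence of the integral and removes the unbounded boundary term, leaving exactly $-\sum_{x\in\ordcritpts{\xi}}\ordcritvar{\xi,x}\,\bkernel(\dualdot{\xi}{x})$, as claimed.
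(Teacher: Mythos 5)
Your route is genuinely different from the paper's proof, which computes one-sided limits of the pushforwards $\Radon[\1_P](\xi,\cdot)$ cell by cell, assembles the jumps of $\Radon(\xi,\cdot)$, and obtains the ECT by Euler-calculus convolution with $\1_{[0,+\infty)}$. Your skeleton is sound: under genericity each cell has a unique minimizing vertex $a(P)$ and maximizing vertex $b(P)$, the lower (resp.\ upper) stars partition $\cplx$, and the ECT regrouping is correct. Two of your points even improve on the text: you correctly replace \Cref{lem:naive} (which as printed has neither the sign $(-1)^{\dim P}$ nor the threshold $\max_P\xi$ that your ECT derivation uses) by its signed, open-cell refinement --- this is really the appendix's \Cref{lem:pushforward-polytope}, namely that a cell of positive dimension contributes $(-1)^{\dim P-1}\phi(P)\,\1_{(\dualdot{\xi}{a(P)},\,\dualdot{\xi}{b(P)})}$ and a vertex contributes $\phi(x)\1_{\{\dualdot{\xi}{x}\}}$ --- and your cancellation $\sum_x \ordcritvar{\xi,x}=0$, which kills the boundary term $\bkernel(+\infty)$, is exactly the justification the paper glosses over when it integrates \eqref{eq:Radon-crit} term by term.

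However, at precisely the step you flag as the main obstacle, the sign bookkeeping does not deliver the coefficient you assert. Writing the contribution of a cell with $\dim P\geq 1$ as $(-1)^{\dim P-1}\phi(P)\bigl(\1_{(\dualdot{\xi}{a(P)},+\infty)}-\1_{\{\dualdot{\xi}{b(P)}\}}-\1_{(\dualdot{\xi}{b(P)},+\infty)}\bigr)$ and collecting terms at a vertex $x$, the point-mass coefficient is indeed $\lwcritvar{\xi,x}$, but the ray coefficient is
\begin{align*}
\sum_{P\in\upstpt{\xi,x}}(-1)^{\dim P-1}\phi(P)\;-\sum_{P\in\lwstpt{\xi,x}}(-1)^{\dim P-1}\phi(P)
&= \Euler[\phi_{|\lwst{\xi,x}}]-\Euler[\phi_{|\upst{\xi,x}}]
= -\,\ordcritvar{\xi,x},
\end{align*}
the \emph{negative} of what you claim. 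Your plan, run correctly, therefore proves $\Radon(\xi,\cdot)=\sum_{x}\lwcritvar{\xi,x}\1_{\{\dualdot{\xi}{x}\}}-\sum_{x}\ordcritvar{\xi,x}\1_{(\dualdot{\xi}{x},+\infty)}$ and, downstream, $\HT(\xi)=+\sum_{x\in\ordcritpts{\xi}}\ordcritvar{\xi,x}\,\bkernel(\dualdot{\xi}{x})$. A one-edge check confirms this is not a slip on your side: for vertices $x_0,x_1$ at heights $0,1$, the edge between them, and all weights $1$, one computes directly $\Radon(\xi,\cdot)=\1_{[0,1]}$, while $\lwcritvar{\xi,x_0}=1$, $\lwcritvar{\xi,x_1}=0$, $\ordcritvar{\xi,x_0}=-1$, $\ordcritvar{\xi,x_1}=1$, so the right-hand side of the lemma as printed evaluates to $\1_{\{0\}}-\1_{(0,1]}$. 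In other words, the printed statement itself carries this sign flip in its ordinary part, inherited from a dropped minus in the paper's own proof: substituting $\xi\mapsto-\xi$ in \eqref{eq:sum-classical-4} reverses both the roles of the stars and the orientation of the one-sided limits, so \eqref{eq:sum-classical-reverse} should read $\Radon(\xi,t^+)-\Radon(\xi,t)=-\sum_x\Euler[\phi_{|\upst{\xi,x}}]$. So you cannot repair your write-up to reach the printed signs; done honestly, your (otherwise valid and more elementary) argument lands on the sign-corrected identities, and your assertion that the ray coefficient ``reassembles into $\Euler[\phi_{|\upst{\xi,x}}]-\Euler[\phi_{|\lwst{\xi,x}}]$'' is the step that fails.
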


Let~$\ccplx$ be a cubical complex. The result below can be adapted to arbitrary polytopal complexes but we do not make use of such a general statement in our algorithms. Since our definition of cubes only allow for axis-aligned cubes, any~$\xi\in\R^n$ with non-zero coordinates is generic on~$\ccplx$. 
For any~$\xi=(\xi_1,\ldots,\xi_n)\in\R^n$ with non-zero coordinates we define its \emph{sign vector}~$\sgnvec{\xi} \in \{\pm 1\}^n$ by~$\sgnvec{\xi}_i = \sgn(\xi_i)$. The following lemma ensures that only~$2^n$ sets of critical points and values are necessary to compute integral transforms in practice. 
\begin{lemma}\label{lem:hyperplane-arrangement}
	Let~$\xi,\xi'\in\big(\R\setminus\{0\}\big)^n$. If~$\sgnvec{\xi}=\sgnvec{\xi'}$, then
    for any~$\phi:\ccplx\to\Z$ and any~$x\in\vert{\ccplx}$ the sets and numbers~$\lwst{\xi,x}$,~$\upst{\xi,x}$,~$\lwcritvar{\xi, x}$,~$\ordcritvar{\xi, x}$, $\lwcritpts{\xi}$ and~$\ordcritpts{\xi}$ are respectively equal to the ones associated to~$\xi'$.
\end{lemma}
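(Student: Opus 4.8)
The plan is to reduce the entire statement to a single geometric observation about axis-aligned cubes: that for a cube $\cubee$, the vertex of $\cubee$ at which a linear form $\xi$ with nonzero coordinates attains its maximum (resp. minimum) depends only on the sign vector $\sgnvec{\xi}$. Granting this, the equalities of the stars $\lwst{\xi,x}=\lwst{\xi',x}$ and $\upst{\xi,x}=\upst{\xi',x}$ follow at once, and then every remaining object in the statement is a function of these two stars together with $\phi$, so the equalities of $\lwcritvar{\xi,x}$, $\ordcritvar{\xi,x}$, $\lwcritpts{\xi}$ and $\ordcritpts{\xi}$ become purely formal consequences.

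First I would fix a cube $\cubee=[a_1,b_1]\times\cdots\times[a_n,b_n]$ containing $x$ and compute $\max_\cubee\xi$ by separating variables. Since $\xi$ is linear and $\cubee$ is a product of intervals, $\max_\cubee\xi=\sum_i\max(\xi_i a_i,\xi_i b_i)$, and because every $\xi_i\neq 0$, each summand is attained at a single endpoint of $[a_i,b_i]$: the endpoint $b_i$ when $\xi_i>0$ and $a_i$ when $\xi_i<0$ (the two endpoints coincide when $a_i=b_i$). Hence the maximum of $\xi$ over $\cubee$ is attained at the unique vertex $v(\xi,\cubee)$ whose $i$-th coordinate equals $b_i$ if $\xi_i>0$ and $a_i$ if $\xi_i<0$. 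Crucially, $v(\xi,\cubee)$ is determined by the signs $\sgn(\xi_i)$ alone, so that $v(\xi,\cubee)=v(\xi',\cubee)$ whenever $\sgnvec{\xi}=\sgnvec{\xi'}$. By definition $\cubee\in\lwst{\xi,x}$ iff $\max_\cubee\xi=\dualdot{\xi}{x}$, which by uniqueness of the maximizer is equivalent to $x=v(\xi,\cubee)$; this membership condition therefore depends only on $\sgnvec{\xi}$, and $\lwst{\xi,x}=\lwst{\xi',x}$ follows. The identical argument applied to minimizers (or to $-\xi$) yields $\upst{\xi,x}=\upst{\xi',x}$.

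It then remains to propagate these two equalities through the definitions. Since $\lwcritvar{\xi,x}=\Euler[\phi_{|\lwst{\xi,x}}]$ and $\ordcritvar{\xi,x}=\Euler[\phi_{|\upst{\xi,x}}]-\Euler[\phi_{|\lwst{\xi,x}}]$ depend on $\xi$ only through the stars $\lwst{\xi,x}$ and $\upst{\xi,x}$ and through $\phi$, the star equalities give $\lwcritvar{\xi,x}=\lwcritvar{\xi',x}$ and $\ordcritvar{\xi,x}=\ordcritvar{\xi',x}$ for every $x\in\vert{\ccplx}$. Finally, $\lwcritpts{\xi}$ and $\ordcritpts{\xi}$ are by definition the sets of vertices at which the respective critical values are nonzero, so equality of the critical values at every vertex forces $\lwcritpts{\xi}=\lwcritpts{\xi'}$ and $\ordcritpts{\xi}=\ordcritpts{\xi'}$, completing the proof. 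The only point that requires real care—and which I regard as the crux—is the uniqueness of the maximizing vertex: this is exactly where the hypothesis $\xi\in(\R\setminus\{0\})^n$ is used, since a vanishing coordinate $\xi_i$ would make $\xi$ constant along the $i$-th edge of $\cubee$, enlarge the maximizing face beyond a single vertex, and break the clean equivalence $\max_\cubee\xi=\dualdot{\xi}{x}\iff x=v(\xi,\cubee)$ on which the whole reduction rests; I would also note that nonzero coordinates are precisely what guarantees $\xi$ and $\xi'$ are generic on $\ccplx$, so that all the stars and critical values are well defined.
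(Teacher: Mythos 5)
Your proof is correct, and its overall skeleton matches the paper's: first show that the lower and upper stars depend on $\xi$ only through $\sgnvec{\xi}$, then observe that $\lwcritvar{\xi,x}$, $\ordcritvar{\xi,x}$, $\lwcritpts{\xi}$ and $\ordcritpts{\xi}$ are functions of these stars and $\phi$ alone; this propagation step is essentially identical in both arguments. Where you differ is in how the key step is justified. The paper compares $\xi$ and $\xi'$ along edges: since every edge $\{x,y\}$ of $\ccplx$ has $x-y$ collinear to a canonical basis vector, equal sign vectors give $\sgn\dualdot{\xi}{x-y}=\sgn\dualdot{\xi'}{x-y}$, and it then asserts, as ``an easy exercise'', that $\xi$ and $\xi'$ induce the same order on $\vert{\ccplx}$, whence equal stars. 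You instead compute $\max_\cubee\xi$ on each cube directly by separation of variables and exhibit the unique maximizing vertex $v(\xi,\cubee)$, determined by the signs alone, so that for $\cubee\in\str{x}$ one has $\cubee\in\lwst{\xi,x}$ iff $x=v(\xi,\cubee)$. Your variant is in fact tighter: taken literally, the paper's claim of a common order on all of $\vert{\ccplx}$ is false --- e.g., $\xi=(1,2)$ and $\xi'=(2,1)$ share sign vector $(+,+)$ yet order the vertices $(1,0)$ and $(0,1)$ of a unit square oppositely --- and what the edge comparisons actually yield is agreement of the order along edges only, which suffices because a linear form is maximized at a vertex of a convex polytope iff it is maximized along the edges emanating from that vertex, a local-to-global step the paper leaves implicit and that your explicit maximizer computation replaces entirely. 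Your closing observations are also sound: degenerate coordinates $a_i=b_i$ cause no harm since the two endpoints coincide, a vanishing $\xi_i$ is precisely what would enlarge the maximizing face beyond a single vertex, and nonzero coordinates simultaneously guarantee genericity of $\xi$ on $\ccplx$. There is no gap.
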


\section{Implementation and asymptotic analysis}\label{sec:implementation}

The mathematical results of the previous section suggest algorithms allowing for efficient computation of the three integral transforms under consideration in this paper. In this section, we describe our implementation and provide worst-case time complexity bounds.
\subsection{Implementation of cubical complexes}
Our implementation of cubical complexes consists in a \cpp{} class \EmbeddedCplx{} that inherits from the cubical complex class \bitmapccplx{} from the \gudhi{} library (release 3.8.0) and additionally stores the embedding coordinates of the vertices in~$\R^n$.  This inheritance allows us to take any $n$-dimensional array/image as input and use \gudhi{}'s initialization to turn it into a cubical complex. The \gudhi{} library stores the weighted cubical complex as a vector of bits representing the value assigned to each cell, with a specific convention for the indexing of cells. By default, one top-dimensional cell is associated to each pixel and the values of lower-dimensional cells are equal to the minimum of the values of their cofaces. By convention, the embedding coordinates of the vertices in $\R^n$ are evenly spaced and normalized in~$[-0.5,0.5]^n$.  Storing the values of the complex in \gudhi{}'s initialization is linear in the number of cells in the complex. We do not consider it as part of our algorithm. However, we take it into account in the running times in our experiments of~\Cref{sec:experiments}.

\subsection{Preprocessing of critical points and values}
After the initialization, the first step to compute integral transforms using \Cref{lem:expression-transforms-crit} is to compute the critical values~$\lwcritvar{\eps, x}$ and~$\ordcritvar{\eps, x}$ for all~$x\in\vert{\ccplx}$ and all sign vectors~$\eps\in\{\pm 1\}^n$. While doing so, we  store on the fly each one of the families~$\lwcritvar{\eps}:=\{\lwcritvar{\eps, x}\}_{x\in\lwcritpts{\eps}}$ and~$\ordcritvar{\eps} := \{\ordcritvar{\eps, x}\}_{x\in\ordcritpts{\eps}}$ as a pair of two arrays, one for the critical points and one for the associated critical values, which induces no overhead in terms of asymptotic running time. Using \gudhi{}'s implementation of weighted cubical complexes, accessing the dimension~$\dim(c)$ and the value~$\phi(c)$ of a cell $c$ takes constant time. Moreover, cells are indexed by a single number in this implementation. As a consequence, iterating through the $\bigO(2^n)$ adjacent cells of a vertex requires arithmetic operations to compute indices of adjacent cells, resulting in $\bigO(n2^n)$ time. The computation of critical values at a vertex thus has the same complexity, and the overall time complexity of computing critical values for all vertices and all sign vectors is then~$\bigO(n4^n\cdot\card{\vert{\ccplx}})$.

\subsection{Exact representation and evaluation} 
After the preprocessing step, the computation of the transforms using \Cref{lem:expression-transforms-crit} goes as follows. Let~$\xi\in\R^n$, and denote~$\card{\lwcritpts{\eps^{\xi}}}$ and~$\card{\ordcritpts{\eps^{\xi}}}$ respectively by~$\cardlw$ and~$\cardord$. First, we compute the sign vector~$\sgnvec{\xi}$ in~$\bigO(n)$ time, which will always be negligible. 

To compute the hybrid transform~$\HT(\xi)$, we sum, over all ordinary critical points $x\in \ordcritpts{\eps^{\xi}}$, the evaluation---assumed to be done in constant time---of the kernel~$\bkernel$ on the dot products $\dualdot{\xi}{x}$. Hence a running time in~$\bigO(n\cardord)$.

To compute the Euler characteristic transform~$\ECT(\xi,\cdot)$, we first compute $T=\{\dualdot{\xi}{x}\}_{x \in \lwcritpts{\sgnvec{\xi}}}$ in $\bigO(n\cardlw)$ time. Then, we sort the set $T$ and the sets~$\lwcritpts{\sgnvec{\xi}}$ and~$\lwcritvar{\sgnvec{\xi}}$ by ascending values of~$\dualdot{\xi}{x}$ in~$\bigO\left(\cardlw\log\cardlw\right)$ time. Finally, we compute a vector~$E$ that contains the cumulative sums of the sorted values~$\lwcritvar{\sgnvec{\xi}}$ in~$\bigO(\cardlw)$ time. An exact representation of the function~$\ECT(\xi,\cdot)$ is then returned as the pair of vectors $(T,E)$. The total running time is in~$\bigO\left(\cardlw(n+ \log\cardlw)\right)$.

To compute the Radon transform, we do the same as for the ECT but replacing~$\lwcritpts{\sgnvec{\xi}}$ by~$\ordcritpts{\sgnvec{\xi}}$ and~$\lwcritvar{\sgnvec{\xi}}$ by~$\ordcritvar{\sgnvec{\xi}}$, to get an analogously constructed pair $(T,E')$ in~$\bigO\left(\cardord(n+ \log\cardord)\right)$ time. Then, we sort the elements~$x\in \lwcritpts{\sgnvec{\xi}}$ by ascending values of~$\dualdot{\xi}{x}$ in~$\bigO\left(\cardlw(n+ \log\cardlw)\right)$ time. After this, we use binary search to find, for each~$x\in\lwcritpts{\sgnvec{\xi}}$, the critical point~$y_x\in\ordcritpts{\sgnvec{\xi}}$ with highest value~$\dualdot{\xi}{y_x} < \dualdot{\xi}{x}$. This takes $\bigO(\cardlw\log\cardord)$ time. We store~$E'[y_x]+\lwcritvar{\sgnvec{\xi},x}$ for each~$x\in\lwcritpts{\sgnvec{\xi}}$ in an array~$E_c$. If there exist several points~$x' \in \lwcritpts{\xi}$ with the same dot product~$\dualdot{\xi}{x'}$, we only store the cumulative sum of the critical values~$\lwcritvar{\sgnvec{\xi},x'}$ in one entry of~$E_c$. An exact representation of the function~$\Radon(\xi,\cdot)$ is then returned as the tuple of the four sorted vectors~$T$,~$E'$,~$T_c = \{\dualdot{\xi}{x}\}_{x \in \lwcritpts{\sgnvec{\xi}}}$ and~$E_c$, in~$\bigO(\cardord \cdot n + (\cardord + \cardlw)\cdot \log\cardord)$ time. 

\begin{remark}
    To better understand these complexities, note that the number of vertices of~$\ccplx$ is asymptotically equivalent to the number of pixels in the image. However, the total number of vertices appears only in the complexity of the preprocessing step.
    In many applications, the numbers~$\cardord$ and~$\cardlw$ are small compared to~$\card{\vert{\ccplx}}$ so that the computation step is significantly optimized. This heuristic will be confirmed in \Cref{sec:experiments}.
\end{remark}

To evaluate the Euler characteristic transform~$\ECT(\xi,\cdot)$ at a point $t\in\R$ using its exact representation~$(T,E)$, we use a binary search to find the element~$x \in \ordcritpts{\eps}$ with greatest~$\dualdot{\xi}{x} \leq t$ and return~$E[x]$ in $\bigO(\log\cardord)$. We also implemented a vectorization routine to get the values of the ECT at sorted and evenly spaced points inside a specified interval. This is done by iterating simultaneously over $T$ and over the points of the interval, hence a running time in~$\mathcal{O}(N+\cardord)$. 

The algorithm to evaluate the Radon transform using its exact representation $(T,E',T_c,E_c)$ is identical, except when there exists~$x \in \lwcritpts{\eps}$ such that~$t = \dualdot{\xi}{x}$, in which case it returns~$E_c[x]$. Again, we use binary search to determine the existence of such a point~$x$ in~$\bigO(\log \cardlw)$ time. 

\section{Experimental evaluation}\label{sec:experiments}
In this section, we illustrate our optimizations on several batches of experiments. In all experiments, integral transforms are computed in 100 directions drawn uniformly at random in the cube~$[0,1]^n$. We compute hybrid transforms with $\kernel(t) = \sin(t)$. To allow for a comparison with \demeter{}, we instantiate our cubical complexes using the \emph{dual construction}, that is, initializing the values of the vertices with the pixel values in the images, and setting the values of higher-dimensional cells as the minimum of the values of their vertices. Moreover, we compare our exact version of the ECT to the discretized version of \demeter{} with resolution \code{T=100}. This is justified by the fact that our vectorization step turning our exact version to a discretized one has negligible cost compared to the other steps, as shown in \Cref{fig:vect-by-pts}. 

Our code has been compiled with gcc~9.4.0 and run on a workstation with an Intel(R) Core(TM) i7-4770 CPU (8 cores, 3.4GHz) and 8 GB RAM, running GNU/Linux (Ubuntu 20.04.1). We measure time consumption using the \python{} module \code{time}. The input data and the benchmark \python{} scripts are available upon request.
\subsection{Data set example}\label{sec:fashion-mnist}
To begin with this experimental section, we compute integral transforms on a real-world data set: \fashionmnist{}. This data set contains 60000 grayscale images of size $28\times 28$ representing several types of clothes. We chose this data set because it contained global geometric information, namely the shape of clothes.

Results are shown in \Cref{tab:fashion-mnist}. In the first three lines, we timed our implementation of integral transforms on the \fashionmnist{} data set with unchanged pixel values, potentially ranging from 0 to 255. In the next four lines, we timed our implementation as well as the ECT of \demeter{} on a binarized version of the \fashionmnist{} data set, where the 50\% brightest pixels of each image are set to 1 and all other pixels to 0. This binarization was necessary since \demeter{} does not cope with grayscale images. 

\begin{table}[H]
    \centering
    \begin{tabular}{|c||c|c|c|c|}
    \hline
       Software & Init. & Preproc. & Comp. & Total \\ \hline\hline 
        Radon     &  13 &  143 &  290 & 447 \\ \hline
        HT        & 13  & 91 &  32 & 137 \\ \hline
        ECT       &  13 &  54 & 117  & 184 \\ \hline\hline
        Radon (binary)     & 13 & 138 & 56 & 207 \\ \hline
        HT (binary)        & 13 & 88 & 14 & 115 \\ \hline
        ECT (binary)     & 13 & 52 & 32 & 97 \\ \hline
        \demeter{} ECT  & 0 & 266 & 1544 & 1810 \\
        \hline
    \end{tabular}
    \caption{Timing (s) of our implementation and of the \demeter{} ECT on the \fashionmnist{} data set.\\ \vspace{-2em}}
    \label{tab:fashion-mnist}
\end{table}

What stands out from this experiment is that all transforms can be computed in a reasonable amount of time on this large data set. The preprocessing step seems to be even more efficient in the case of hybrid transforms, where the evaluation step in 100 directions is done in between a third and a sixth of the time of the preprocessing step.

More importantly, this experiment perfectly illustrates the efficiency of our implementation. Each $28\times 28$ image of the \fashionmnist{} data set induces a cubical complex with 784 vertices. On average, there are~$172$ classical critical points with a standard deviation of~$50$ and~$237$ ordinary critical points with a standard deviation of~$65$. Thus, around a quarter of the vertices are classical critical points, and the same is true for ordinary critical points. In contrast, the complexes of the binarized data set have $23$ classical critical points on average with a standard deviation of $12$ and $36$ ordinary critical points with a standard deviation of $19$. As a consequence, after the initialization and preprocessing steps, our optimization using critical points allows us to compute an exact version of the ECT on the binarized version more than 50 times faster than \demeter{} computes its discretized version. Overall, our implementation is 18 times faster than \demeter{}.

Moreover, thanks to our generalization of Euler critical values to weighted polytopal complexes, it is still 10 times faster to compute our exact version of the ECT on the original grayscale data set than to compute it on the binarized data set using \demeter{}.

\subsection{Asymptotic analysis}\label{sec:asymptotical-analysis}
In this section, we perform an experimental asymptotic analysis of our algorithms. This analysis confirms our theoretical results on time complexities of \Cref{sec:implementation}. 

\paragraph*{Size of images} We study the computation times on~$m\times m$ images with respect to the parameter~$m$. We use binary images with constant number of critical points made of a periodic pattern of black squares in a white background. The number of squares is constant when~$m$ varies to ensure a constant number of critical points. Namely, they are~$100$ classical critical points and~$200$ ordinary critical points for all images.

As one could expect, the initialization (\Cref{fig:size-init}) seems linear in the number of cells in the complex, that is, quadratic in~$m$. Similarly, the preprocessing step (\Cref{fig:size-preproc}) is linear in the number of vertices in the complex, that is, quadratic in~$m$. The strength of our implementation is illustrated by \Cref{fig:size-eval-dem}. After the initialization and the preprocessing steps, computing the integral transforms is done in constant time when the number of critical points is fixed. The curves for our times in \Cref{fig:size-eval-dem} are constant and lie between 1 and 5 milliseconds.

Overall, our computation times are dominated by the preprocessing step. Remarkably, our implementation has improved computation times by a factor of 4 compared with \demeter{}.
\begin{figure}
    \centering
    \begin{subfigure}[t]{0.32\linewidth}
        \centering
        \includegraphics[width=0.93\linewidth]{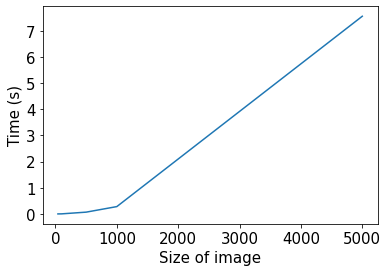}
        \caption{Initialization}
        \label{fig:size-init}
    \end{subfigure}
    \begin{subfigure}[t]{0.32\linewidth}
        \centering
        \includegraphics[width=0.93\linewidth]{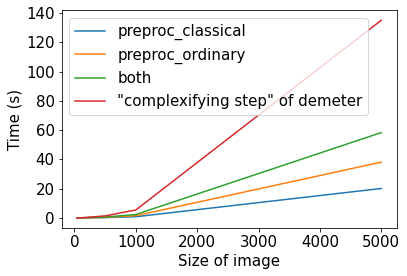}
        \caption{Preprocessing or ``complexification'' of \demeter}
        \label{fig:size-preproc}
    \end{subfigure}
    \begin{subfigure}[t]{0.32\linewidth}
        \centering
        \includegraphics[width=0.93\linewidth]{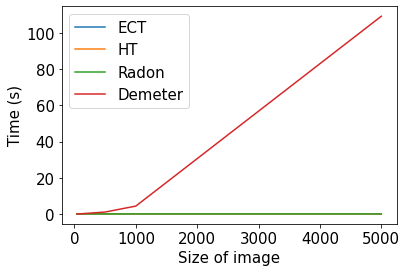}
        \caption{Computation times excluding initialization and preprocessing}
        \label{fig:size-eval-dem}
    \end{subfigure}
    \caption{Step-by-step computation times for varying size~$m$ of~$m\times m$ images.}
\end{figure}

\paragraph*{Number of critical points}
We study the computation times with respect to the number of critical points in~$1000\times 1000$ images. We use the same periodic pattern of squares to control the number of critical points, this time with a varying number of squares.

The initialization and preprocessing steps are done in constant time for a constant size of image. More precisely the initialization takes around 1 second and the preprocessing of classical and ordinary critical points take approximately 3 and 6 seconds respectively. The ``complexifying'' step of \demeter{} also takes around 6 seconds. 

After the initialization and the preprocessing steps, computing the integral transforms is done in linear (HT) and linearithmic (ECT, Radon) times in the number of critical points for a fixed size of images. This is clearly illustrated in \Cref{fig:crit-pts-eval-dem}. Overall, computing an Euler characteristic curve for a binary image with four millions pixels takes between a 0.1 milliseconds to a few seconds depending on the number of critical points.

Interestingly, the computation time for the \demeter{} package decreases with the number of critical points. This is due to our periodic patterns with varying size used to generate images with a varying number of critical points. The smaller the patterns are, the greater the number of critical points, but the lesser edges and two-dimensional cubes in the complex. As a consequence, when the number of critical points increases up to one quarter of the total number of vertices, \demeter{} has a faster evaluation step than our evaluation of the ECT (\Cref{fig:crit-pts-eval-dem}). Time for the total pipeline are thus very similar between their and our implementation (\Cref{fig:crit-pts-total-dem}). 

However, this fact can be mitigated by two remarks. First of all, we compute an exact version of the Euler characteristic transforms and not an approximated version with a resolution of \code{T=100}. This approximated version is far from the exact one, as they are 25000 changes of values in the true Euler characteristic curve when they are 25000 classical critical points. Secondly, the number of critical points in real-world binary images containing global geometric features are likely to be way lower than a quarter of the vertices, as suggested by our study on a real-world data set in \Cref{sec:fashion-mnist}. While the number of critical points is likely to be higher for grayscale images, the \demeter{} package can no longer be used in this case.

\begin{figure}
    \centering
    \begin{subfigure}[t]{0.49\linewidth}
        \centering
        \includegraphics[width=\linewidth]{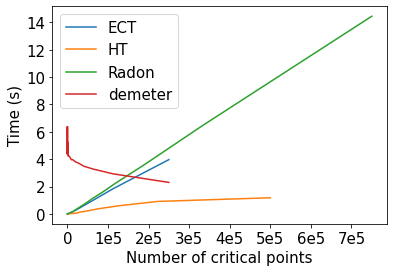}
        \caption{}
        \label{fig:crit-pts-eval-dem}
    \end{subfigure}
    \begin{subfigure}[t]{0.49\linewidth}
        \centering
        \includegraphics[width=\linewidth]{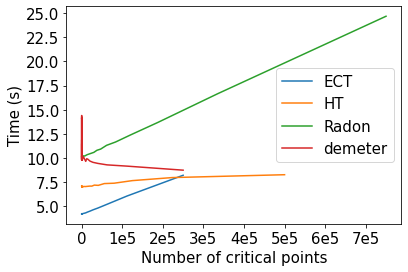}
        \caption{}
        \label{fig:crit-pts-total-dem}
    \end{subfigure}
    \caption{Computation times for varying number of critical points excluding (\ref{fig:crit-pts-eval-dem}) and including (\ref{fig:crit-pts-total-dem}) initialization and preprocessing steps.}
\end{figure}

\paragraph*{Number of directions}
We study the influence of the number of requested directions on the total running time. All computations are done on a random binarized image of the fashion MNIST dataset. We run the computation of the ECT (including initialization and preprocessing) for 1 to 10000 directions sampled uniformly in~$[0,1]^2$, and measure the corresponding times. The initial value of each curve corresponds to the initialization time. We give these plots on log-scale in \Cref{fig:fashion-mnist-log}, then on linear scale with the initial values forced to zero (which corresponds to ignoring the initialization time) in \Cref{fig:fashio-mnist-linear}. By a linear regression, we check that the points we describe sample a linear function and we determine that its slope is $2.8 \cdot 10^{-6}$ seconds by direction for our method, and $2.3 \cdot 10^{-4}$ seconds by direction for \demeter{}.

\begin{figure}
    \centering
    \begin{subfigure}[t]{0.45\linewidth}
        \centering
        \includegraphics[width=\linewidth]{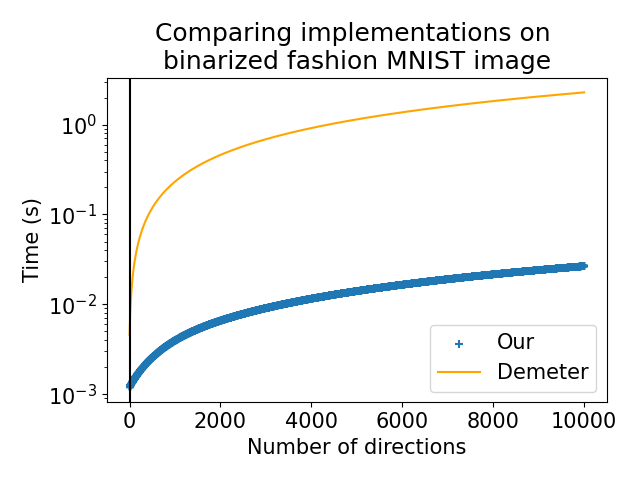}
        \caption{Logarithmic scale}
        \label{fig:fashion-mnist-log}
    \end{subfigure}
    \begin{subfigure}[t]{0.45\linewidth}
        \centering
        \includegraphics[width=\linewidth]{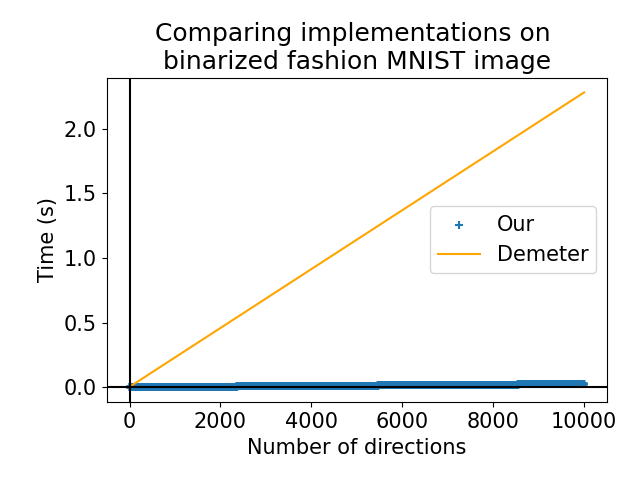}
        \caption{Linear scale}
        \label{fig:fashio-mnist-linear}
    \end{subfigure}
    \caption{Computation times (s) for varying number of directions of our and \demeter{}'s implementation of the ECT on a single image of the fashion-MNIST dataset.}
    \label{fig:fashion-mnist-comparison}
\end{figure}

\paragraph*{Vectorization} 
On \Cref{fig:vect-by-pts} we ran the vectorization routine on 1000x1000 image with $10^6$ critical points for a varying dimension of vectorization, that is, sampling the ECT on an increasing number of points $t\in\R$. The complexity of the vectorization routine is confirmed by \Cref{fig:vect-by-pts}. More importantly, vectorization is negligible (a few milliseconds) compared to other steps of the computation of the transform (a few seconds each). 
\begin{figure}[!h]
    \centering\includegraphics[height=0.17\paperheight]{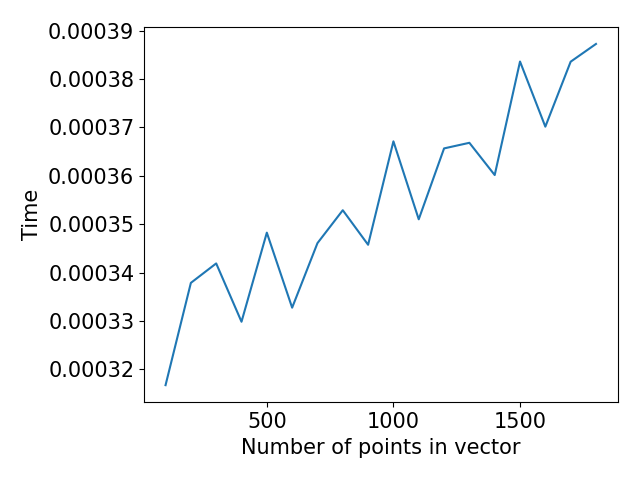}
    \caption{Vectorization time (s) with respect to dimension for a $1000\times 1000$ image with~$10^6$ critical points.\vspace{-1.5em}}
    \label{fig:vect-by-pts}
\end{figure}

\paragraph*{Average number of critical points}
In this section, we expose an experimental study of the average number of critical points in random settings for two-dimensional grayscale images as the number of shades of gray increases. Pixel values are drawn using a uniform distribution in a first setting and using a Gaussian random field in a second setting.

In both cases, the number of both types of critical points in grayscale images is approximately half the number of vertices as soon as the number of pixel values is above 10. In the binary case, the number of critical points of each type is between 10\% and 20\% of the total number of vertices. These results corroborate  the computation times observed on the data sets presented above: the number of critical points is in practice lower than the number of vertices, and much lower than the number of cells in the complex. 

While this experiment shows that the number of critical point is of the same order as the number of vertices, it is expected not to be the case on digital images with global geometric features, as suggested in \Cref{sec:fashion-mnist}.

\begin{figure}[!h]
    \centering
\includegraphics[height=0.17\paperheight]{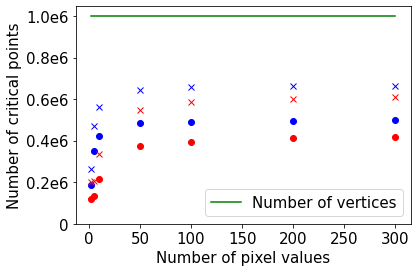}
    \caption{Number of critical points in two random settings. Red and blue symbols represent respectively classical and ordinary critical points. Dots and crosses respectively represent these numbers for the uniform setting and for the Gaussian random field setting.\vspace{-1.5em}}
    \label{fig:random-crit-pts}
\end{figure}

\begin{figure}[!h]
    \centering
    \includegraphics[height=0.17\paperheight]{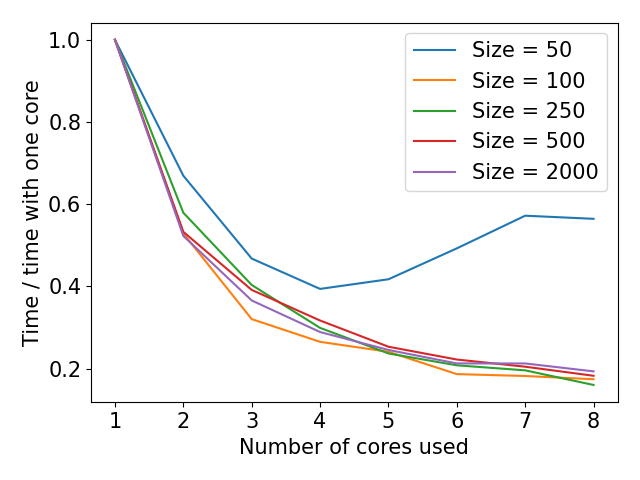}
    \caption{Relative pre-processing time (s) given the number of cores used for several complex sizes.\vspace{-1.5em}}
    \label{fig:parallelization}
\end{figure}

\subsection{Parallelization}
We parallelized the routines \algPreprocOrdPts{} and \algPreprocClaPts{} as they are the most time-consuming functions. To do so, we parallelized the inner loop on~$\vert{\ccplx}$, each core handling the computations on a subset of the vertices. The readings are concurrent as cores only need to access fixed cell values. Each core returns an array of critical points and an array of critical values and the main core concatenates the results. Parallelizing the inner loop instead of the main one is efficient because the number of sign vectors in dimension~$n$ is~$2^n$, and $n=2$ or $3$ in most practical cases.
We also implemented a parallelization of the routine \algComputeHT{} that takes as input a vector of directions and returns a vector containing the values of the hybrid transfoms in these directions. In this case each core handles the computations on a subset of the directions and returns the result to the main core that concatenates them. On \Cref{fig:parallelization} we study the time of the preprocessing of critical points (both classical and ordinary) given the number of cores used, relatively to the single core setup. We do this for several complex sizes with cells values uniformly sampled in $\{0,1\}$. We see that the parallelization induces a really small overhead, even for small complexes. Moreover, recall that even with only one core, our optimizations allow our software to be 100 times faster than the \demeter{} package on images of the \fashionmnist{} data set.

\section{Conclusion}\label{sec:conclusion}

We presented a novel implementation of topological integral transforms on weighted cubical complexes, based on existing results from piecewise linear Morse theory slightly adapted to weighted polytopal complexes. These results allow for the fast computation of integral transforms on weighted cubical complexes. As a consequence, our implementation can efficiently compute exact versions of the ECT and Radon transform, and we showed experimentally that it is 18 times faster than the \demeter{} ECT on a real-world data set, and 50 times faster after the preprocessing step. We also added a vectorization routine for data analysis purposes.

A natural extension of our implementation would be to arbitrary polytopal complexes. For any such complex, there is a cellular decomposition of (the dual of)~$\R^n$ such that \Cref{lem:hyperplane-arrangement} holds for any $\xi, \xi'$ lying in the same cell~\cite{CMT18}. This cellular decomposition is induced by the arrangement of hyperplanes that are orthogonal to the edges of the complex. The main challenge is that the size of this arrangement for a general polytopal complex with $n$ vertices in $\R^d$ may be as large as $n^{2d}$, as opposed to only $2^d$ for cubical complexes.



\bibliography{mybib}

\appendix

\section{Running example}
\label{app:running-example}

In this section, we present an example of run of our algorithms on a small binary cubical complex. The complex is illustrated in \Cref{fig:example_complex}. Cells with non-zero values include $9$ vertices, $9$ edges, and $2$ squares.

\begin{figure}
    \centering
    \includegraphics[width=0.4\textwidth]{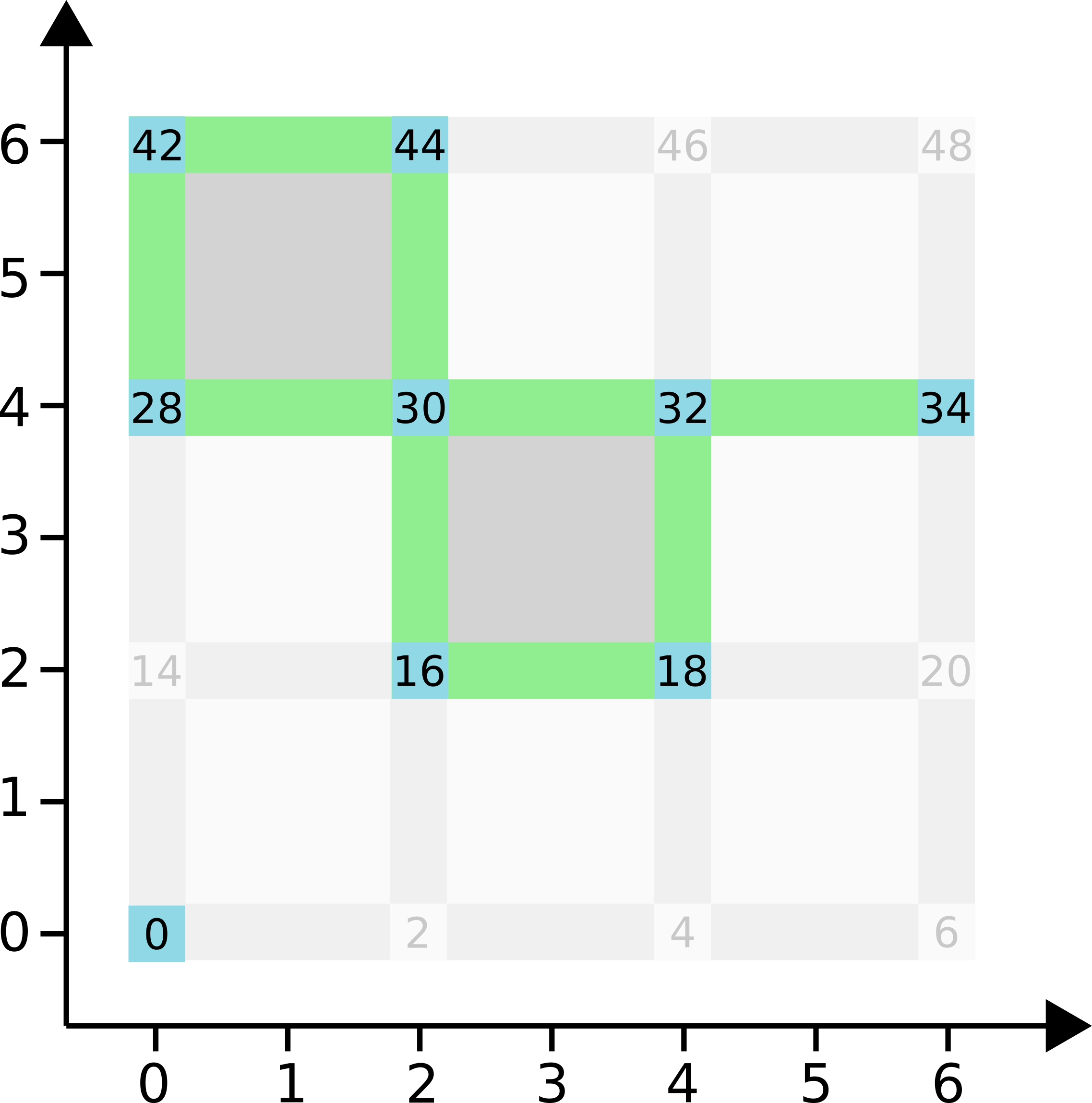}
    \caption{Example of cubical complex. Colored cells have value 1 and light gray cells have value 0. The blue cells are vertices, the green ones are edges and the gray ones are faces. For greater clarity we display the keys of the vertices.\vspace{-1.5em}}
    \label{fig:example_complex}
\end{figure}

\subsection{Preprocessing}
In this section, we unfold the computation of the critical points. We only detail it for the ordinary ones, the computation for the classical ones being similar. We detail the operations for~$\eps = (1,-1)$ and~$x = 32$. First we compute~$e_{-}$, the euler characteristic of the lower star of~$x$ in direction~$\eps$. The coordinates of~$x$ are~$c = (32 \div 7,\, 32 \mod 7) = (4,4)$, so we explore all cells with coordinates~$(4-\delta_0 \eps_0,4-\delta_1 \eps_1)$ where~$(\delta_0 , \delta_1) \in \{0,1\}^2$. These coordinates are~$(4,4),(3,4),(4,5)$ and $(3,5)$. They correspond to the cells~$32,31,39$ and $38$ respectively. Therefore, one has~$e_- = -1 + 1 + 0 + 0 = 0$. In the same way we can compute~$e_+$, the upper star at~$x$ in direction~$\eps$. We visit cells~$32,33,25$ and $26$ and ouput~$e_+ = 1 + -1 + -1 + 0 = -1$. Thus~$e_+ \neq e_-$ and the vertex~$32$ is an ordinary critical point of~$\eps=(1,-1)$ with critical value~$-1$. We list critical points and values for all~$\eps \in \{-1,1\}^2$ in~\Cref{tab:example_critical_points}.

\begin{table}[]
    \centering
    \begin{tabular}{|c||c|c|} \hline 
            \multirow{2}{*}{$\eps$} & $\ordcritpts{\eps}$ &  $\lwcritpts{\eps}$ \\
                                    & $\ordcritvar{\eps}$ &  $\lwcritvar{\eps}$ \\[0.2em] \hline\hline
            \multirow{2}{*}{$(-1,-1)$} & [28, 16, 44, 34] & [0, 30, 44, 34] \\
                                    & [1, 1, -1, -1] & [1, -1, 1, 1] \\[0.2em] \hline
            \multirow{2}{*}{$(-1,1)$} & [42, 18, 32, 34] &  [0, 18, 32, 34] \\
                                    & [1, -1, 1, -1] & [1, 1, -1, 1] \\[0.2em] \hline
            \multirow{2}{*}{$(1,-1)$} & [42, 18, 32, 34] &  [0,42] \\
                                    & [-1, 1, -1, 1] & [1,1] \\[0.2em] \hline
        \multirow{2}{*}{$(1,1)$} & [28, 16, 44, 34] &  [0, 28, 16, 30] \\
                                    & [-1, -1, 1, 1] & [1, 1, 1, -1] \\[0.2em] \hline
    \end{tabular}
    \caption{Critical points and values of the complex in \Cref{fig:example_complex}.\vspace{-1.5em}}
    \label{tab:example_critical_points}
\end{table}

\subsection{Computation}
Now, we unfold the computation step for Radon transform, ECT, and hybrid transform. Throughout the section, we set~$\xi = (2,2)$ and denote its sign vector simply by~$\eps = (1,1)$. We first describe the computation of the ECT. We start by computing the dot product of~$\xi$ with all the critical points in~$\codelwcritpts_\eps = \{0,28,16,30 \}$. These critical points have coordinates~$[-0.5,-0.5]$, $[-0.5,0.167]$, $[-0.167,-0.167]$ and~$[-0.167,0.167]$ in~$\R^2$. Thus the dot products are~$[-2,0.67,-0.67, 0]$. Then, we sort the critical points by increasing dot product, obtaining the list~\texttt{sorted}$(\codelwcritpts_\eps) = [0,28,16,30]$. Finally, we sum the critical values~$\codelwcritval_{\eps}$ in the order given by the sorting of critical points. We begin with~$T=[-\infty]$ and~$E = [0]$. After visiting the first two vertices,~$0$ and~$28$, we obtain~$T = [-\infty,-2,-0.67]$ and~$E = [0,1,2]$. The next sorted vertex is~$16$, which has the same dot product as~$28$. In that case, we add the variation associated to~$16$ to the last element of~$E$. We obtain~$T = [-\infty,-2,-0.67]$ and~$E = [0,1,3]$. Finally, after adding vertex~$30$ we obtain~$T = [-\infty,-2,-0.67,0]$ and~$E = [0,1,3,2]$. The computation of the Radon transform follows the same process at first but with $\codeordcritpts_\eps$ instead of $\codelwcritpts_\eps$. In that case, $T = [-\infty, -0.67, 0.67, 1.33]$ and $E' = [0.0, -2.0, -1.0, 0.0]$. For each $x \in \codelwcritpts_\eps$, we must then find the greatest $i_0$ such that $T[i_0] \leq \dualdot{\xi}{x}$, and add $\dualdot{\xi}{x}$ to $T_c$ and $E'[i_0] + \codelwcritval_{\eps}[x]$ to $E_c$. We thus get $T_c = [-2.0, -0.67, 0.0]$ and $E_c = [1.0, 0.0, -1.0]$. Finally, computing the hybrid transform is done by summing $-\bkernel(\dualdot{\xi}{x}) \cdot \codeordcritval_\eps[x]$ for $x \in \codeordcritpts_\eps$. In our case, with $\bkernel = \exp$, we obtain $\HT(\xi) \simeq -4.715$.
    
\subsection{Evaluation}
We now unfold the evaluation of the Radon transform and ECT. In this example, we want to evaluate~$\ECT(\xi,t)$ for some~$t\in\R$. For $\xi=(2,2)$, we computed in the above section that~$T = [-\infty,-2,-0.67,0]$ and~$E = [0,1,3,2]$. By a binary search, we find the index~$i$ of the largest value in~$T$ that is smaller than~$t$. Then, we let~$\ECT(\xi,t)=E[i]$. For instance~$t = -0.3$ gives~$i=2$ and~$t=1$ gives~$i=3$. Thus, we get~$\ECT(\xi,-0.3)=3$ and~$\ECT(\xi,1)=2$. The evaluation of the Radon transform is similar. First, we search for indices~$i$ such that~$T_c[i] = t$. In that specific case, the method outputs~$\Radon(\xi,t)=E_c[i]$. Apart from this, the computations are identical.

\subsection{Vectorization}

The vectorization routine of the ECT (similar to the vectorization of the Radon transform) takes as input two bounds $a$ and $b$, and a number $N$  of points. Then, it returns $\ECT(\xi,a + \frac{i(b-a)}{N-1})$ for all $i \in [0,\dots,N-1]$. For instance, if $a=-2.0$, $b=2.0$, and $N=5$, we first iterate over $i=0,\dots,4$. For $i=0$, we have $t_0 = a = -2$, so $\ECT(\xi,t_0) = E[1] = 1$. For $i=1$, we know that $t_1 = -1 > t_0$, so we can start to search at index 1 in $T$. Then, we find that $\ECT(\xi,t_1) = E[1] = 1$. Finally, the values of $\ECT(\xi,t_0)$ for $t_0 \in [-2,-1,0,1,2]$ are $[1,1,2,2,2]$.

\section{Proofs of mathematical results}
\label{app:proofs}
In this section, we prove \Cref{lem:expression-transforms-crit,lem:hyperplane-arrangement} on which our algorithms are built. We first state and prove some necessary lemmas before giving the proofs of our main results. Throughout the section, let $\cplx$ be a polytopal complex in $\R^n$ and let $\xi:\R^n \to \R$ be a linear form that is generic on $\cplx$.

These results and notions introduced in \Cref{sec:preliminaries,sec:Euler-critical-points} are more naturally phrased and proven using Euler calculus \cite{S91,V88}, that is, the integral calculus of constructible functions with respect to the Euler characteristic. For instance, our definition of Euler characteristic $\Euler(\phi)$ of a weighted cubical complex $\phi:\cplx\to\Z$ is equal to the Euler integral \cite[Def.~2.2]{S91} of the constructible function $\widehat{\phi}:\R^n\to\Z$ defined by:
\begin{equation*}
    \widehat{\phi} = \sum_{P\in\cplx} \phi(P)\1_{\relint{P}},
\end{equation*}
where $\relint{Q}$ denotes the relative interior of a polytope $Q\subseteq \R^n$. Similarly, for any $\xi\in\R^n$, the function $\Radon(\xi,\cdot):\R\to\Z$ is equal to the so-called \emph{pushforward} $\xi_*\widehat{\phi}$ in the language of Euler calculus; see \cite[Eq.~(2.5)]{S91}. For the sake of simplicity, we avoided the general definition of constructible functions and of Euler integration in the core of the paper. Along the section, we introduce notions of Euler calculus only when necessary to keep the text accessible to the broadest audience and refer to \cite{S91} for more details.

For any $P\in\cplx$ we denote by $\1_P:\cplx\to\Z$ the weighted cubical complex with value $1$ on $P$ and $0$ on any over cells (even on proper faces of $P$). We call $\1_P$ the \emph{indicating function} of $P$. Then, any weighted cubical complex $\phi:\cplx\to\Z$ can be written as a sum~$\phi = \sum_{P\in\cplx}\phi(P)\1_P$. 

The following lemma gives an explicit expression of the Radon transfrom of indicating functions. Before stating it, let us make the following remark. The Euler characteristic, Euler characteristic transforms, Radon transforms and hybrid transforms of a function~$\phi:\cplx\to\Z$ and of its restriction to the complex~$\cplx^\phi$ induced by the collection~$\{P\in\cplx : \phi(P)\ne 0\}$ are equal. 
\begin{lemma}\label{lem:pushforward-polytope}
    Let $P\in\cplx$ and assume that $P$ is not a vertex of $\cplx$. Then, one has~$\Radon[\1_P](\xi,\cdot) = (-1)^{\dim(P)-1}\1_{(\min_P f, \max_P f)}$.
\end{lemma}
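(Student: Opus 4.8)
The plan is to recast the statement in the language of Euler calculus and reduce it to a single fiberwise Euler-characteristic computation. Since the indicating function satisfies $\widehat{\1_P} = \1_{\relint{P}}$, the identity $\Radon(\xi,\cdot) = \xi_*\widehat{\phi}$ recalled above specializes, for every $t\in\R$, to
\begin{equation*}
    \Radon[\1_P](\xi,t) = \big(\xi_*\1_{\relint{P}}\big)(t) = \Eulerc[\relint{P}\cap\{\xi = t\}],
\end{equation*}
where the right-hand side is the Euler integral of the indicator of the fiber, i.e. its compactly-supported Euler characteristic. The whole lemma then follows from evaluating this quantity as $t$ varies over $\R$.

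First I would dispose of the cases $t<\min_P\xi$ and $t>\max_P\xi$, where the fiber is empty and the value is $0$. Next, for $\min_P\xi<t<\max_P\xi$, the affine hyperplane $\{\xi=t\}$ meets the relative interior of $P$, so I would invoke the standard polytope fact $\relint{P}\cap\{\xi=t\} = \relint{P\cap\{\xi=t\}}$: the fiber is the relative interior of the cross-section $P\cap\{\xi=t\}$, which is a polytope of dimension $\dim(P)-1$. Being homeomorphic to $\R^{\dim(P)-1}$, its compactly-supported Euler characteristic is $\Eulerc[\R^{\dim(P)-1}] = (-1)^{\dim(P)-1}$, which is exactly the announced value on the open interval.

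The remaining, and most delicate, case is $t\in\{\min_P\xi,\max_P\xi\}$, which decides whether the support is the open or the closed interval, and where genericity enters. Since $P$ is not a vertex we have $\dim(P)\geq 1$, and genericity forbids $\xi$ from being constant on $P$; as a linear form attaining an extremum over a convex set at a relative interior point would be constant there, the extrema of $\xi$ on $P$ are attained only on $\partial P$. Hence $\relint{P}\cap\{\xi=\min_P\xi\}$ and $\relint{P}\cap\{\xi=\max_P\xi\}$ are both empty, so the value vanishes at the endpoints and the support is the open interval $(\min_P\xi,\max_P\xi)$. Combining the three cases gives $\Radon[\1_P](\xi,\cdot) = (-1)^{\dim(P)-1}\1_{(\min_P\xi,\max_P\xi)}$. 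I expect the only non-bookkeeping steps to be this genericity-driven endpoint argument and the identification of the open fiber with an open $(\dim(P)-1)$-ball yielding the sign $(-1)^{\dim(P)-1}$; alternatively, one could bypass the pushforward and compute $\Euler[\phieq]$ for $\phi = \1_P$ directly as an alternating sum over the cells of $\cplxeq$ whose smallest containing cell in $\cplx$ is $P$, arriving at the same result.
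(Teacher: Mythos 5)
Your proof is correct, but it takes a genuinely different route from the one written in the paper. The paper itself remarks that the lemma ``is an easy consequence of the Euler calculus formulation'' and then deliberately gives an elementary proof for completeness; what you have written is essentially that sketched Euler-calculus derivation, filled in. Concretely, the paper's proof never leaves the combinatorial setting: it classifies the fiber $P\cap\{\xi=t\}$ using genericity (empty outside $[\min_P\xi,\max_P\xi]$; a single vertex $y$ at the endpoints, where the induced weight is $\phieq(y)=\1_P(y)=0$ because $P$ is not a vertex; a polytope of dimension $\dim(P)-1$ in between) and then evaluates $\Euler[\phieq]$ on $\cplxeq$ directly, observing that exactly one cell of $\cplxeq$, namely the cross-section itself, inherits the weight $1$, all lower-dimensional cells inheriting $0$. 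Your proof instead pushes forward $\widehat{\1_P}=\1_{\relint{P}}$ and computes $\Eulerc[\relint{P}\cap\{\xi=t\}]$ fiberwise. What your route buys: it is shorter and conceptually cleaner (the sign $(-1)^{\dim(P)-1}$ appears as $\Eulerc[\R^{\dim(P)-1}]$ rather than through bookkeeping of induced weights), and at the endpoints it needs only that $\xi$ is nonconstant on $P$ (a linear form attaining its extremum at a relatively interior point of a convex set is constant), whereas the paper's argument additionally uses genericity on the faces of $P$ to identify the endpoint fiber as a vertex. What it costs: you import the identification $\Radon[\1_P](\xi,\cdot)=\xi_*\1_{\relint{P}}$, which the paper states with a citation but does not prove, together with two convexity facts you assert rather than verify, namely that $\{\xi=t\}$ meets $\relint{P}$ for $t\in(\min_P\xi,\max_P\xi)$ (this follows from $\xi(\relint{P})=(\min_P\xi,\max_P\xi)$ for $\xi$ nonconstant on $P$, and is what licenses $\relint{P}\cap\{\xi=t\}=\relint{P\cap\{\xi=t\}}$, e.g.\ Rockafellar's Theorem~6.5) and that the cross-section then has dimension $\dim(P)-1$; each deserves a line, though none is a gap in substance.
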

\begin{proof}
    This fact is an easy consequence of the Euler calculus formulation of our notions detailed in the beginning of this section. We give an elementary proof for completeness. Let us denote $\phi = \1_P$. Since $\xi$ is linear, the set $P\cap \{\xi=t\}$ is a polytope (possibly empty) for any $t\in\R$. Moreover, since $\xi$ is generic on $\cplx$ and $P$ is not a vertex, this polytope is either a vertex of $P$ or it has dimension $\dim(P)-1$.
    
    Let $t\in\R$. We distinguish three cases. First assume that $t\not\in[\min_P\xi,\max_P\xi]$. Since $P$ is connected and $\xi$ is continuous, one has that $[\min_P\xi,\max_P\xi] = \xi(P)$. Therefore, the polytope $P\cap\{\xi=t\}$ is empty, so is $\cplxeq^\phi$ and hence $\Radon(\xi,t)=0$.

    Assume now that $t\in\{\min_P\xi,\max_P\xi\}$. Since $\xi$ is linear and generic on $\cplx$, it must be that $P\cap\{\xi=t\}$ is a vertex $y$ of $P$. In that case $\cplxeq^\phi = \{y\}$ and $\Radon(\xi,t) = \phi(y)=0$ since $\phi = \1_P$ and $P$ is not a vertex.

    Suppose now that $t\in(\min_P\xi,\max_P\xi)$. In that case, the polytope $Q = P \cap \{\xi=t\}$ is neither empty nor a vertex. Therefore, this polytope has dimension $\dim(P)-1$. Since $\cplx^\phi$ consists of $P$ and all its faces, the polytopal complex $\cplxeq^\phi$ contains only one polytope of dimension $\dim(P)-1$ which is $Q$. Any other polytope $Q'$ of $\cplxeq^\phi$ has dimension smaller or equal to $\dim(P)-2$. In such a case, one has $\phieq(Q') \ne \phi(P)$ and thus $\phieq(Q') = 0$. As a consequence, one has $\Radon(\xi,t) = \phieq(Q) (-1)^{\dim(Q)} = (-1)^{\dim(P)-1}$. Hence the result.%
\end{proof}

For any vertex $x$ of $\cplx$, we denote by $\lwstpt{\xi,x}$ the set $\lwst{\xi,x}\setminus\{x\}$. Similarly, we define the set $\upstpt{\xi,x}$. Moreover, for any function $h:\R\to\R$ we denote one-sided limits at any $t\in\R$ by:
\begin{equation*}
    h(t^-) = \limit_{\substack{s\to t\\s<t}} h(s) \hspace{1cm} \mbox{and} \hspace{1cm}
    h(t^+) = \limit_{\substack{s\to t\\s>t}} h(s),
\end{equation*} 
whenever they are defined. Moreover, following the notations of the beginning of this section, Euler calculus ensures that for any $\phi:\cplx\to\Z$ the function $\Radon(\xi,\cdot) = \xi_*\widehat{\phi}$ is a finite sum of indicating functions of real intervals \cite[Thm.~2.3.(i)]{S91}, hence admits all one-sided limits both from above and from below. The following lemma provides additional details on the Radon transform of indicating functions. It will be crucial in our study of critical points (\Cref{lem:critical-values}). 
\begin{lemma}\label{lem:pushforward-relint}
    Let $x\in\vert{\cplx}$ be such that $\dualdot{\xi}{x}=t$ and let $P\in\str{x}$. One has 
    \begin{equation*}
        \Radon[\1_P](\xi,t) = 
        \begin{cases}
            1                   &\hspace{-2.2em}\big(P = \{x\}\big),\\
            0                   &\hspace{-2.2em}\big(P\in\lwstpt{\xi,x} \cup \upstpt{\xi,x}\big),\\
            (-1)^{\dim(P)-1}    &\!\!\!\!\!\big(P\not\in\lwst{\xi,x} \cup \upst{\xi,x}\big).
        \end{cases}
    \end{equation*}
    Moreover,
    \begin{equation*}\label{lem:pushforward-relint}
        \Radon[\1_P](\xi,t^-) = 
        \begin{cases}
            0                   &\big(P\in\upst{\xi,x}\big), \\
            (-1)^{\dim(P)-1}    &\big(P\not\in\upst{\xi,x}\big).
        \end{cases}
    \end{equation*}
    Similarly,
    \begin{equation*}
        \Radon[\1_P](\xi,t^+) = 
        \begin{cases}
            0                   &\big(P\in\lwst{\xi,x}\big),\\
            (-1)^{\dim(P)-1}    &\big(P\not\in\lwst{\xi,x}\big).
        \end{cases}
    \end{equation*}
\end{lemma}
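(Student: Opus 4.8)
The plan is to reduce everything to \Cref{lem:pushforward-polytope} by a careful case analysis on the position of the polytope $P$ relative to the hyperplane $\{\xi = t\}$, where $t = \dualdot{\xi}{x}$. The key observation is that $\Radon[\1_P](\xi, \cdot)$ is, by \Cref{lem:pushforward-polytope}, essentially an indicator of the open interval $(\min_P\xi, \max_P\xi)$ (up to sign, when $P$ is not a vertex), so the value at $t$ and the one-sided limits $t^\pm$ are entirely determined by how $t$ sits with respect to the endpoints $\min_P\xi$ and $\max_P\xi$. The membership of $P$ in the lower/upper star of $x$ is exactly the combinatorial data encoding this: $P\in\lwst{\xi,x}$ means $\max_P\xi = \xi(x) = t$, and $P\in\upst{\xi,x}$ means $\min_P\xi = \xi(x) = t$.

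First I would handle the value at $t$ itself. If $P = \{x\}$, then $\1_P$ is the indicating function of a vertex and $\Radon[\1_P](\xi, t) = \phi(x) = 1$ directly from the definition. If $P$ is not a vertex, I invoke \Cref{lem:pushforward-polytope}, so $\Radon[\1_P](\xi, t) = (-1)^{\dim(P)-1}\1_{(\min_P\xi, \max_P\xi)}(t)$. This is nonzero precisely when $\min_P\xi < t < \max_P\xi$. Since $P\in\str{x}$ we have $x\in P$ and hence $\min_P\xi \leq t \leq \max_P\xi$; the value $t$ fails to be strictly interior exactly when $t = \min_P\xi$ (i.e. $P\in\upst{\xi,x}$) or $t = \max_P\xi$ (i.e. $P\in\lwst{\xi,x}$). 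Thus the indicator vanishes iff $P\in\lwstpt{\xi,x}\cup\upstpt{\xi,x}$ (the vertex case already being separated out), and equals $(-1)^{\dim(P)-1}$ otherwise, which gives the two remaining lines.

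Next I would treat the one-sided limits. For the left limit at $t$, note that for $s$ slightly less than $t$ the indicator $\1_{(\min_P\xi, \max_P\xi)}(s)$ equals $1$ precisely when $s > \min_P\xi$, which holds for all $s$ close enough to $t$ below unless $t \leq \min_P\xi$; since $x\in P$ forces $\min_P\xi \leq t$, the only obstruction is $\min_P\xi = t$, i.e. $P\in\upst{\xi,x}$. Hence $\Radon[\1_P](\xi, t^-) = 0$ when $P\in\upst{\xi,x}$ and $(-1)^{\dim(P)-1}$ otherwise. The right limit is symmetric: it is governed by whether $t \geq \max_P\xi$, which (again using $\max_P\xi \geq t$) reduces to $\max_P\xi = t$, i.e. $P\in\lwst{\xi,x}$. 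One subtlety is that \Cref{lem:pushforward-polytope} is stated only for $P$ not a vertex; for $P = \{x\}$ the function $\Radon[\1_P](\xi,\cdot) = \1_{\{t\}}$ is supported at a single point, so both one-sided limits are $0$, and this is consistent with the stated formulas since a vertex $\{x\}$ lies in both $\lwst{\xi,x}$ and $\upst{\xi,x}$.

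I expect the main obstacle to be bookkeeping rather than anything deep: carefully matching the geometric conditions ``$t$ equals an endpoint of $\xi(P)$'' with the combinatorial star-membership conditions, and making sure the boundary endpoint cases (where $t$ coincides with $\min_P\xi$ or $\max_P\xi$) are assigned to the correct formula both at $t$ and in the limits. Genericity of $\xi$ is used implicitly through \Cref{lem:pushforward-polytope} to guarantee that $\xi(P) = [\min_P\xi, \max_P\xi]$ is a genuine nondegenerate interval whenever $P$ is not a vertex, so that the open/closed distinctions are meaningful; I would make sure to note that $x\in P$ (from $P\in\str{x}$) is what pins $t$ inside the closed interval and thereby restricts the possible endpoint coincidences to exactly the two star conditions.
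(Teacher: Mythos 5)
Your proof is correct, and it takes a genuinely more economical route than the paper's. You treat \Cref{lem:pushforward-relint} as a pure corollary of \Cref{lem:pushforward-polytope}: since that lemma determines the \emph{entire} function $\Radon[\1_P](\xi,\cdot) = (-1)^{\dim(P)-1}\1_{(\min_P\xi,\,\max_P\xi)}$ for non-vertex $P$ (and $\Radon[\1_{\{x\}}](\xi,\cdot)=\1_{\{t\}}$ directly from the definition), the value at $t$ and both one-sided limits follow by evaluating an explicit interval indicator, once star membership is translated into endpoint coincidences ($P\in\lwst{\xi,x}\Leftrightarrow\max_P\xi=t$ and $P\in\upst{\xi,x}\Leftrightarrow\min_P\xi=t$, with genericity ruling out $\min_P\xi=\max_P\xi$ for positive-dimensional $P$). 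The paper instead re-runs the geometric slicing argument in each case: for the value at $t$ it computes the slice complex directly (e.g.\ $\cplxeq^{\phi}=\{x\}$ when $P\in\lwstpt{\xi,x}\cup\upstpt{\xi,x}$, and for $P\notin\lwst{\xi,x}\cup\upst{\xi,x}$ it shows $Q=P\cap\{\xi=t\}$ is a $(\dim(P)-1)$-dimensional polytope ``as in the proof of \Cref{lem:pushforward-polytope}''); for the left limit it studies the slices $Q_s=\{\xi=s\}\cap P$ for $s$ below $t$; and it then deduces the right-limit formula from the left-limit one by replacing $\xi$ with $-\xi$. Both arguments rest on the same underlying geometry, but yours buys brevity and makes the logical dependence transparent, at the price of two small points you rightly made explicit and should keep: that $P\in\str{x}$ with $P$ a vertex forces $P=\{x\}$ (so your two-case split is exhaustive and \Cref{lem:pushforward-polytope} applies in the second case), and that the vertex case in the limit formulas is covered because $\{x\}$ lies in both $\lwst{\xi,x}$ and $\upst{\xi,x}$, a case the paper absorbs into its direct treatment of $P\in\upst{\xi,x}$ (resp.\ $P\in\lwst{\xi,x}$).
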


\begin{proof}
    Throughout the proof, denote $\phi = \1_P$. Let us first prove the first equality. The case $P = \{x\}$ is clear. If $P\in\lwstpt{x}\cup\upstpt{x}$, then $\cplxeq^\phi = \{x\}$ and $\Radon(\xi,t) = \phi(x) = 0$. Now, suppose that $P\not\in\lwst{\xi,x}\cup\upst{\xi,x}$. Since $t \in \xi(P)$, the polytope $Q = P \cap \{\xi=t\}$ is non-empty. Moreover, the polytope $Q$ is not a vertex, for otherwise the generic linear form $\xi$ would be constant on the edge between the distinct vertices $Q$ and $x$. Therefore, the polytope $Q$ has dimension $\dim(P)-1$. As in the proof \Cref{lem:pushforward-polytope}, one can thus show that $\Radon(\xi,t) = \phieq(Q) (-1)^{\dim(Q)} = (-1)^{\dim(P)-1}$.

    Now, let us prove the second equality. If $P\in\upst{\xi,x}$, then $P\subseteq \{\xi\geq t\}$ and $\{\xi = s\}\cap P = \varnothing$ for all $s<t$. Hence $\Radon(\xi,t^-) = 0$. If $P\in\str{x}\setminus\upst{\xi,x}$, then $\min_P \xi < t$. Since both $\min_P \xi$ and $t$ lie in the image of the connected set $P$ by the continuous map $\xi$, then $[\min_P \xi, t]\subseteq \xi(P)$. Therefore, for any $s\in[\min_P \xi, t]$, the polytope $Q_s = \{\xi=s\}\cap P$ is non-empty. The same argument as in the previous paragraph shows that $Q_s$ is not a vertex and hence has dimension $\dim(P)-1$. Once again, we can conclude that $\Radon(\xi,t) = \phieq(Q_s) (-1)^{\dim(Q_s)} = (-1)^{\dim(P)-1}$.

    The proof of the last equality follows from the last one applied to the linear form $-\xi$.
\end{proof}

\begin{lemma}\label{lem:critical-values}
    Let $\phi:\cplx\to\Z$. For any $t\in\R$,
    \begin{align*}
        \Radon(\xi,t) - \Radon(\xi,t^-) &= \sum_{\substack{x\in\lwcritpts{\xi}\\\dualdot{\xi}{x} = t}} \lwcritvar{\xi,x},\\[0.5em]
        \Radon(\xi,t^+) - \Radon(\xi,t^-) &= \sum_{\substack{x\in\ordcritpts{\xi} \\ \dualdot{\xi}{x}=t}} \ordcritvar{\xi,x}.
    \end{align*} 
\end{lemma}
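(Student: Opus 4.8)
The plan is to combine the linearity of the Radon transform in $\phi$ with the explicit one-sided limits of the Radon transform of indicating functions provided by \Cref{lem:pushforward-relint}. Writing $\phi = \sum_{P\in\cplx}\phi(P)\1_P$, both identities reduce by linearity to the study, cell by cell, of the jumps $\Radon[\1_P](\xi,t) - \Radon[\1_P](\xi,t^-)$ and $\Radon[\1_P](\xi,t^+) - \Radon[\1_P](\xi,t^-)$. By \Cref{lem:pushforward-polytope}, for a non-vertex cell $P$ the function $\Radon[\1_P](\xi,\cdot)$ is a signed indicating function of the open interval $(\min_P\xi, \max_P\xi)$, so it can only jump at $t \in \{\min_P\xi, \max_P\xi\}$, while a vertex $\{y\}$ only jumps at $t = \dualdot{\xi}{y}$.

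First I would fix $t\in\R$ and regroup the global sum over cells according to the vertices $x$ with $\dualdot{\xi}{x} = t$. The key point is that, by genericity of $\xi$ on $\cplx$, no positive-dimensional cell is contained in a level set, so each non-vertex cell has a unique minimizing vertex and a unique maximizing vertex, and these are distinct. Hence a non-vertex cell $P$ with $t = \max_P\xi$ lies in the punctured lower star $\lwstpt{\xi,x}$ of its (unique) maximizing vertex $x$, a cell with $t = \min_P\xi$ lies in the punctured upper star $\upstpt{\xi,x}$ of its minimizing vertex, and every contributing cell belongs to exactly one such star. This turns each side of the identities into a sum, over the vertices $x$ at level $t$, of local contributions coming from $\{x\}$, from $\lwstpt{\xi,x}$, and from $\upstpt{\xi,x}$.

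Next I would read off the individual jumps from \Cref{lem:pushforward-relint}. For a vertex $x$ with $\dualdot{\xi}{x} = t$, the cell $\{x\}$ contributes $1$ to the first difference and $0$ to the second; a cell $P\in\lwstpt{\xi,x}$ contributes $(-1)^{\dim P}$ to both differences; and a cell $P\in\upstpt{\xi,x}$ contributes $0$ to the first difference and $(-1)^{\dim P-1}$ to the second. Weighting by $\phi(P)$ and adding back the vertex term $(-1)^0\phi(x) = \phi(x)$, the signed local sums reassemble exactly into the Euler characteristics $\Euler[\phi_{|\lwst{\xi,x}}]$ and $\Euler[\phi_{|\upst{\xi,x}}]$. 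For the first identity the local contribution is then $\Euler[\phi_{|\lwst{\xi,x}}] = \lwcritvar{\xi,x}$, and for the second it assembles from $\Euler[\phi_{|\lwst{\xi,x}}]$ and $\Euler[\phi_{|\upst{\xi,x}}]$ into $\ordcritvar{\xi,x}$; tracking the signs coming from the one-sided limits is exactly what matches the two stated formulas.

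To conclude, since each sum a priori ranges over all vertices at level $t$, I would restrict it to the critical points: a vertex with $\lwcritvar{\xi,x} = 0$ (resp. $\ordcritvar{\xi,x} = 0$) contributes nothing, so replacing the index set by $\lwcritpts{\xi}$ (resp. $\ordcritpts{\xi}$) leaves the value unchanged. I expect the main obstacle to be the combinatorial bookkeeping of the regrouping step — establishing, via genericity, that every positive-dimensional cell feeds the jump at $t$ through a single vertex and a single star — together with the careful handling of signs and of the one-sided limits from \Cref{lem:pushforward-relint}, which is precisely where the two identities diverge (value against left limit for the classical part, right limit against left limit for the ordinary part).
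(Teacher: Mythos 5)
Your treatment of the first identity is correct and essentially the paper's own proof: decompose $\phi=\sum_{P\in\cplx}\phi(P)\1_P$, kill the non-contributing cells via \Cref{lem:pushforward-polytope}, regroup the contributing cells by the unique vertex whose lower star contains them (the paper establishes this uniqueness by the same genericity argument you sketch), and recognize the local sum as $\Euler[\phi_{|\lwst{\xi,x}}]=\lwcritvar{\xi,x}$. For the second identity your route is more direct than the paper's, which instead applies its equation \eqref{eq:sum-classical-4} to $-\xi$ and telescopes; but this is exactly where your argument breaks. All your individual jump values are right ($0$ from $\{x\}$, $(-1)^{\dim P}$ from $P\in\lwstpt{\xi,x}$, and $(-1)^{\dim P-1}$ from $P\in\upstpt{\xi,x}$), yet they do not assemble into $\ordcritvar{\xi,x}$ as you assert: since $(-1)^{\dim P-1}=-(-1)^{\dim P}$ and the two $\phi(x)$ terms cancel, the local sum at a vertex $x$ at level $t$ is $\sum_{P\in\lwstpt{\xi,x}}(-1)^{\dim P}\phi(P)+\sum_{P\in\upstpt{\xi,x}}(-1)^{\dim P-1}\phi(P)=\Euler[\phi_{|\lwst{\xi,x}}]-\Euler[\phi_{|\upst{\xi,x}}]=-\ordcritvar{\xi,x}$. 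Your plan, executed carefully, proves the second identity with the opposite sign; the sentence claiming that ``tracking the signs \dots is exactly what matches the two stated formulas'' is precisely the step you never carried out, and it fails.

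What makes this worth flagging rather than just fixing is that the sign your method produces is the correct one, and the discrepancy lies in the statement itself. Test the smallest example: $\cplx=\{\varnothing,\{a\},\{b\},[a,b]\}$ with $\phi\equiv 1$ and $\dualdot{\xi}{a}=0<\dualdot{\xi}{b}=1$. Then $\Radon(\xi,\cdot)=\1_{[0,1]}$, so $\Radon(\xi,0^+)-\Radon(\xi,0^-)=+1$, while $\ordcritvar{\xi,a}=\Euler[\phi_{|\upst{\xi,a}}]-\Euler[\phi_{|\lwst{\xi,a}}]=0-1=-1$. The paper's own derivation contains two compensating slips: substituting $-\xi$ into \eqref{eq:sum-classical-4} actually yields $\Radon(\xi,t)-\Radon(\xi,t^+)$ on the left-hand side of \eqref{eq:sum-classical-reverse}, not $\Radon(\xi,t^+)-\Radon(\xi,t)$ as displayed, and the telescoping identity invoked at the end, $\Radon(\xi,t^+)-\Radon(\xi,t^-)=\Radon(\xi,t^+)-\Radon(\xi,t)-(\Radon(\xi,t)-\Radon(\xi,t^-))$, is false (the correct decomposition is a sum of the two consecutive jumps, not a difference). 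To make your proof honest you should either state and prove the identity with right-hand side $\sum(\Euler[\phi_{|\lwst{\xi,x}}]-\Euler[\phi_{|\upst{\xi,x}}])$ over vertices at level $t$, or explicitly flag that the lemma as stated (and the ordinary-part signs it feeds into \Cref{lem:expression-transforms-crit}) requires $\ordcritvar{\xi,x}$ to be replaced by its negative; as written, your proposal papers over the inconsistency at the decisive moment.
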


\begin{proof}
    Let us prove the first equality. Let $t\in\R$. Since $\phi = \sum_{P\in\cplx}\phi(P) \1_{P}$ one has: 
    \begin{equation}\label{eq:sum-classical-1}
            \Radon(\xi,t) - \Radon(\xi,t^-) = \sum_{P\in\cplx}\phi(P) \big(\Radon[\1_P](\xi,t) - \Radon[\1_P](\xi,t^-)\big).
    \end{equation}
    Fix $P\in\cplx$. Suppose that no vertex of $P$ belongs to the level set $\{\xi=t\}$. Since $\xi$ is linear, it implies that $t\ne \min_P f$ and $t\ne \max_P f$. Thus $t\in(\min_P f, \max_P f)$ or $t\in\R\setminus[\min_P f, \max_P f]$ and \Cref{lem:pushforward-polytope} implies that:
    \begin{align*}
        \Radon[\1_P](\xi,t)- \Radon[\1_P](\xi,t^-) &= (-1)^{\dim(P)-1}\bigg(\1_{(\min_P \xi, \max_P \xi)}(t) - \1_{(\min_P \xi, \max_P \xi)}(t^-)\bigg)\\
        &= 0.
    \end{align*}
    Now, suppose there exists a vertex $x$ of $P$ such that $\dualdot{\xi}{x} = t$. \Cref{lem:pushforward-relint} implies that if $P\not\in\lwst{\xi,x}$, then $\Radon[\1_P](\xi,t)- \Radon[\1_P](\xi,t^-) = 0$. Indeed, either $P\in\upstpt{\xi,x}$ and then $\Radon[\1_P](\xi,t) = \Radon[\1_P](\xi,t^-) = 0$, or $P\not\in\lwst{\xi,x}\cup\upst{\xi,x}$ and then $\Radon[\1_P](\xi,t) = \Radon[\1_P](\xi,t^-) = (-1)^{\dim(P)-1}$. Therefore, \eqref{eq:sum-classical-1} can be written:
    \begin{equation}\label{eq:sum-classical-2}
        \Radon(\xi,t) - \Radon(\xi,t^-) = \sum\phi(P) \left(\Radon[\1_P](\xi,t)- \Radon[\1_P](\xi,t^-)\right)
    \end{equation}
    where the sum in the right-hand side is over all polytopes $P\in\cplx$ such that there exists $x\in\vert{P}$ with $\{\xi=t\}$ and $P\in\lwst{\xi,x}$.
    Moreover, for any $P\in\cplx$ there is at most one vertex $x$ of $P$ such that $P\in\lwst{\xi,x}$. Indeed, if there were $x\ne x'$ such that $P\in\lwst{\xi,x}\cap \lwst{\xi,x'}$, then $\dualdot{\xi}{x} = \max_P \xi = \dualdot{\xi}{x'}$. Denote $m = \dualdot{\xi}{x} = \dualdot{\xi}{x'}$. Then $\{\xi = m\}\cap P$ is a face of $P$ that is not a vertex and on which the generic linear form $\xi$ is constant, a contradiction. 

    Therefore, \eqref{eq:sum-classical-2} can be written:
    \begin{equation}\label{eq:sum-classical-3}
        \Radon(\xi,t) - \Radon(\xi,t^-) = \sum_{\substack{x\in\vert{\cplx} \\ \dualdot{\xi}{x} = t}} \sum_{P\in\lwst{\xi,x}} \phi(P) \bigg(\Radon(\xi,t) - \Radon(\xi,t^-)\bigg).
    \end{equation}
    Let $x\in\vert{\cplx}$ such that $\dualdot{\xi}{x} = t$ and $P\in\lwst{\xi,x}$. \Cref{lem:pushforward-relint} implies that $\Radon[\1_P](\xi,t) = 1$ if $P = \{x\}$ and $0$ if $P\in\lwstpt{\xi,x}$. The same lemma implies that $\Radon[\1_P](\xi,t^-) = 0$ if $P = \{x\}$ and $(-1)^{\dim(P)-1}$ if $P\in\lwstpt{\xi,x}$. Therefore, one has:
    \begin{align*}
        \sum_{P\in\lwst{\xi,x}}  \phi(P) \left(\Radon[\1_P](\xi,t)- \Radon[\1_P](\xi,t^-)\right) &= \phi(x) - \sum_{P\in\lwstpt{\xi,x}} (-1)^{\dim(P)-1}\phi(P) \\
        &= \sum_{P\in\lwst{\xi,x}} (-1)^{\dim(P)}\phi(P) \\
        &= \lwcritvar{\xi,x}.
    \end{align*}
    Therefore, \eqref{eq:sum-classical-3} can be written:
    \begin{equation}\label{eq:sum-classical-4}
        \Radon(\xi,t) - \Radon(\xi,t^-) = \sum_{\substack{x\in\vert{\cplx} \\ \dualdot{\xi}{x} = t}} \lwcritvar{\xi,x}.
    \end{equation}
    The set $\lwcritpts{\xi}$ being the set of vertices $x$ with non-zero $\lwcritvar{\xi,x}$, the first equality of the lemma follows.

    Now, let us prove the second equality. Replacing $\xi$ by $-\xi$, one gets from \eqref{eq:sum-classical-4} that:
    \begin{equation}\label{eq:sum-classical-reverse}
        \Radon(\xi,t^+) - \Radon(\xi,t) = \sum_{\substack{x\in\vert{\cplx} \\ \dualdot{\xi}{x}=t}} \Euler[\phi_{|\upst{\xi,x}}].
    \end{equation}
    Hence the result, using \eqref{eq:sum-classical-4}, \eqref{eq:sum-classical-reverse}, and the fact that $\Radon(\xi,t^+) - \Radon(\xi,t^-) = \Radon(\xi,t^+) - \Radon(\xi,t) - (\Radon(\xi,t) - \Radon(\xi,t^-))$.
\end{proof}

\subsection{Proof of \Cref{lem:expression-transforms-crit}}
Let us prove the equality for the Radon transform.
Euler calculus \cite[Thm.~2.3.(i)]{S91} ensures that $\xi_*\widehat{\phi}$ is a finite sum of indicating functions of real intervals. Therefore, it is locally constant on the complement of a finite number of points in $\R$. It is then an easy exercise to check that \Cref{lem:critical-values} yields the result:
\begin{equation}\label{eq:Radon-crit}
    \Radon(\xi,\cdot) = \sum_{x\in\lwcritpts{\xi}} \lwcritvar{\xi,x}\cdot \1_{\{\dualdot{\xi}{x}\}} + \sum_{x\in\ordcritpts{\xi}} \ordcritvar{\xi,x}\cdot \1_{(\dualdot{\xi}{x},+\infty)}.
\end{equation}
By definition of the hybrid transform, integrating~\eqref{eq:Radon-crit} against a locally integrable kernel $\kappa : \R\to\C$ yields the result for $\HT(\xi)$.

The formula for the ECT can be easily shown using the classical \emph{convolution} operation of Euler calculus \cite[Sec.~4]{S91}. This bilinear operator takes two functions~$\theta$ and~$\theta'$ which are finite sums of indicating functions of intervals of~$\R$ and returns a function~$\theta\conv\theta'$ with the same property. One can show that~$\ECT(\xi,t) = \Radon(\xi,\cdot)\conv\1_{[0,+\infty)}(t)$; see for instance \cite[Ex.~7.4, Prop.~7.5]{Lebovici22}. The expression for the ECT follows then from~\eqref{eq:Radon-crit} and the fact that for~$a\in\R$ one has~$\1_{(a,+\infty) }\conv\1_{[0,+\infty)} = 0$ and~$\1_{{a}}\conv\1_{[0,+\infty)} = \1_{[a,+\infty)}$; see \cite[Exs.~4.1, 4.3]{S91}. This concludes the proof of \Cref{lem:expression-transforms-crit}.\hfill $\square$

\subsection{Proof of \Cref{lem:hyperplane-arrangement}}
Let $\ccplx$ be a cubical complex. In that case, any edge $\{x,y\} \in\ccplx$ is such that $x-y$ is colinear to a canonical basis vector of $\R^n$. Let $\xi$ and $\xi'$ be in $\R^n\setminus\{0\}$ with $\sgnvec{\xi} = \sgnvec{\xi'}$. Since $\xi$ and $\xi'$ have same sign vector, we have $\sgn\dualdot{\xi}{x-y} = \sgn\dualdot{\xi'}{x-y}$ for all edge $\{x, y\}\in \ccplx$. It is an easy exercise to check that thus $\xi$ and $\xi'$ induce the same order on $\vert{\ccplx}$. As a consequence, the lower and upper stars of vertices of $\ccplx$ with respect to $\xi$ and $\xi'$ are the same. Moreover, the sets and numbers $\lwcritvar{\xi, x}$, $\ordcritvar{\xi, x}$, $\lwcritpts{\xi}$ and $\ordcritpts{\xi}$ depend only on $\lwst{\xi,x}$ and $\upst{\xi,x}$ (and not on $\xi$), hence the result. \hfill $\square$
\end{document}